\def\moverlay{\mathpalette\mov@rlay}
\def\mov@rlay#1#2{\leavevmode\vtop{%
   \baselineskip\z@skip \lineskiplimit-\maxdimen
   \ialign{\hfil$\m@th#1##$\hfil\cr#2\crcr}}}
\newcommand{\charfusion}[3][\mathord]{
    #1{\ifx#1\mathop\vphantom{#2}\fi
        \mathpalette\mov@rlay{#2\cr#3}
      }
    \ifx#1\mathop\expandafter\displaylimits\fi}
\newcommand{\bigcupdot}{\charfusion[\mathop]{\bigcup}{\cdot}}
\algrenewcommand{\algorithmiccomment}[1]{\footnotesize {\fontfamily{ppl}\selectfont \% #1}}
\title{A Water-Filling Primal-Dual Algorithm for Approximating Non-Linear Covering Problems}
\titlerunning{Approximating Non-Linear Covering Problems}
\author{Andrés Fielbaum\inst{1} 
\and Ignacio Morales\inst{2} \and José Verschae\inst{3}}
\institute{TU Delft, Delft, The Netherlands, \email{a.s.fielbaumschnitzler@tudelft.nl} \and
Universidad de O'Higgins, Rancagua, Chile,
\email{jose.verschae@uoh.cl} \and
Pontificia Universidad Católica, Santiago, Chile,
\email{inmorales@uc.cl}}
\def\NLKC{Non-Linear Knapsack-Cover}
\def\NLUFP{Non-Linear UFP-Cover}
\begin{document}
\maketitle

\section{Introduction}

\begin{abstract}
Obtaining strong linear relaxations of capacitated covering problems constitute a major technical challenge even for simple settings. For one of the most basic cases, the Knapsack-Cover (Min-Knapsack) problem, the relaxation based on \emph{knapsack-cover inequalities} achieves an integrality gap of 2. These inequalities have been exploited in more general environments, many of which admit primal-dual approximation algorithms.

Inspired by problems from power and transport systems, we introduce a new general setting in which items can be taken fractionally to cover a given demand. The cost incurred by an item is given by an arbitrary non-decreasing function of the chosen fraction. We generalize the knapsack-cover inequalities to this setting an use them to obtain a $(2+\varepsilon)$-approxi\-mate primal-dual algorithm. Our procedure has a natural interpretation as a bucket-filling algorithm, which effectively balances the difficulties given by having different slopes in the cost functions: when some superior portion of an item presents a low slope, it helps to increase the priority with which the inferior portions may be taken. We also present a rounding algorithm with an approximation guarantee of~2.

We generalize our algorithm to the Unsplittable Flow-Cover problem on a line, also for the setting where items can be taken fractionally. For this problem we obtain a  $(4+\varepsilon)$-approximation algorithm in polynomial time, almost matching the $4$-approximation known for the classical setting. 


\keywords{Knapsack-Cover Inequalities,  Non-Linear Knapsack-Cover, Primal-Dual, Water-Filling Algorithm}
\end{abstract}





Covering problems have been heavily studied by the combinatorial optimization community. Understanding their polyhedral descriptions, and how to approximate them, is a challenging and important task even for simple variants, as they often appear as sub-problems of more complicated formulations. In the literature of approximation algorithms, one of the main tools for obtaining strong linear relaxations of covering problems are the \emph{knapsack-cover inequalities}, introduced by Carr et al.~\cite{carr_strengthening_2000}. These inequalities and its generalizations have been heavily used in deriving approximation algorithms in different contexts~\cite{carnes_primal_dual_2015,cheung_primal_dual_2016,li_constant_2017,bansal_geometry_2014,mccormick_primal_dual_2017,bar-noy_unified_2001,bansal_geometry_2014,bansal_geometry_2019,moseley_scheduling_2019,levi_approximation_2008,an_lp-based_2017}. Although the family of inequalities is of exponential size, they can be approximately separated up to a factor of $1+\varepsilon$ in polynomial time. Additionally, in many cases they are well adjusted for primal-dual algorithms, which avoid having to solve the relaxation and yield combinatorial algorithms~\cite{bar-noy_unified_2001,carnes_primal_dual_2015,cheung_primal_dual_2016,mccormick_primal_dual_2017}.

In the classical Knapsack-Cover problem we are given a set of $n$ items $N$ and a demand $D\in\mathbb{N}$. Each item $i$ has an associated covering capacity~$u_i$ and cost~$c_i$. The objective is to choose a subset of items covering $D$ at a minimum cost. We introduce a new natural generalization of this problem motivated by applications on the operation of power systems and the design of public transport systems. In this version we can choose items partially at a given cost, which might be non-linear. More precisely, each item $i\in N$ have an associated non-decreasing cost function $f_i:\mathbb{N}\rightarrow \mathbb{Q}_{\ge0}\cup\{\infty\}$. We must choose a number $x_i\in\mathbb{N}$ for each $i$ such that $\sum_{i}x_i\ge D$. The cost of such solution is $\sum_{i\in N}f_i(x_i)$. There is a reduction of the Knapsack-Cover problem to this setting by considering $f_i(0)=0$, $f_i(1)=\ldots=f_i(u_i)=c_i$ and $f_i(x)=\infty$ for $x>u_i$. We say that we are in the \emph{list model} if the input contains the numbers $f_i(0),f_i(1),\ldots,f_i(D)$ explicitly as a list. In this case the reduction above is pseudo-polynomial. On the other hand, if each $f_i$ is given by an oracle that outputs $f_i(x)$ for any $x$, we say that we are in the \emph{oracle model}. In this case the reduction above can be made polynomial. 
We call our newly introduced problem the \emph{\NLKC{} problem}. Our setting also generalizes the \emph{Single-Demand Facility Location} problem studied by Carnes and Shmoys~\cite{carnes_primal_dual_2015}. In this setting each item (facility) has an activation cost $b_i$ and then the cost grows linearly at a rate of $a_i$, that is, $f_i(0)=0$, $f_i(x)= b_i + c_ix$ for $x\in\{1,\ldots,u_i\}$, and $f_i(x)=\infty$ otherwise.

More generally, we study the Non-Linear variant of the Unsplittable Flow-Cover problem on a path (UFP-cover), that extends the \NLKC{} problem. In the original UFP-cover problem, first considered by Bar-Noy et al.~\cite{bar-noy_unified_2001}, we have a discrete interval $I=\{1,...,k\}$ and a set $N$ of $n$ items, each one characterized by a capacity or height $u_i$, a cost $c_i$, and a sub-interval $I_i\subseteq \{1,\ldots,k\}$. We also have a demand $D_t$ for each $t \in I$. The problem consists on selecting the cheapest set of items such that the total height at any point in~$I$ is at least the demand, that is, we must pick a set $S$ minimizing $\sum_{i\in S}c_i$ such that $\sum_{i\in S:I_i\ni t} u_i\ge D_t$ for all $t\in I$. For this classic version, there is an algorithm that provides a 4-approximation based on the local-ratio framework~\cite{bar-noy_unified_2001}, or equivalently~\cite{bar_yehuda_equivalence_2005}, based on the primal-dual framework using knapsack-cover inequalities~\cite{cheung_primal_dual_2016}. In this paper we generalize this problem to the case where items can be taken partially. As before we are giving a non-decreasing function $f_i=\mathbb{N}\rightarrow \mathbb{Q}_{\ge 0}\cup\{\infty\}$ for each item. We can choose to set the height of any item to a value $x_i\in \mathbb{N}$ by paying a cost $f_i(x_i)$. We mush choose heights in order to cover the demand at each point $t\in I$ at a minimum total cost $\sum_{i}f_i(x_i)$. Notice that this setting generalizes \NLKC{}. 

In this article we provide a generalization of the knapsack-cover inequalities to the \NLKC{} problem, which we also apply to \NLUFP{}. 
The obtained relaxations yield primal-dual algorithms matching the classical settings. Namely, for \NLKC{} we show a  2-approximation algorithm, and for \NLUFP{} a 4-approximation algorithm, both running in polynomial time in the list model. They can be adapted to yield a $(2+\varepsilon)$- and $(4+\varepsilon)$-approximation, respectively, in polynomial time for the oracle model. Additionally, we show a rounding technique for the \NLKC{} case also achieving a $2$-approximation for the list model. 

\noindent\emph{Motivation.} One of our main motivations for considering non-linear cost functions comes from the Unit Commitment Problem (UCP), a prominent problem in the operation of power systems. In its most basic version, a central planner, called Independent System Operator (ISO), must schedule the production of energy generated from a given set of power plants, in order to satisfy a given demand. A common issue in this setting is that plants incur fixed costs for starting production, and after the resource is available, a minimum amount of energy must be produced. For the case of one time period, the problem corresponds exactly to \NLKC{}. It is worth noticing that after paying the fixed cost the behavior of the cost functions might be non-linear and are often modelled by convex quadratic functions~\cite{zhu_optimization_2015}.

On the other hand, \NLUFP{} appears in the optimization of transport systems. Consider a long avenue in which there are several bus stops $\{1,\ldots,k\}$, and passengers need to move (in a single direction) within them. This yields some demand $D_t$ at each point $t$, representing the total number of passengers that contain $i$ inside their path. On the supply side, there are potential bus transit lines, each covering some sub-interval of the avenue, which could be longer routes that intersect the avenue in and out in given points in $\{1,\ldots,k\}$. Each line can supply different capacities through some optimal combination of frequencies and bus sizes, represented by line-specific cost functions. There is a vast literature concerning economies of scale in public transport lines: for instance, Mohring~\cite{Mohring72optimisationand} states that there are economies of scale in public transport, Fielbaum et al.~\cite{fielbaum_beyond_2019} show that they get exhausted, while Coulombel and Monchambert~\cite{Coulombel_Diseconomies_2019} propose that the system could face diseconomies of scale when the demand exceeds certain thresholds. Hence, techniques to manage non-linear functions (that can have convex and concave regions) are needed.

\noindent\emph{Related Work.}  
The use of the primal-dual method to derive approximation algorithms was introduced by Bar-Yehuda and Even~\cite{bar-yehuda_lineartime_1981} and Chvátal~\cite{chvatal_greedy_1979} and then became an important general tool for designing approximation algorithms~\cite{williamson_design_2011}.


The first work to consider the primal-dual setting based on knapsack-cover inequalities were by Bar-Noy et al.~\cite{bar-noy_unified_2001}. However, they posed their algorithm in the equivalent local-ratio framework~\cite{bar_yehuda_equivalence_2005}, even before the knapsack-cover inequalities were introduced and without stating the underlying LP-relaxation. Their techniques yield a 4-approximation algorithm and their analysis is tight~\cite{cheung_primal_dual_2016}. Additionally, this problem admits a quasi-polynomial time approximation scheme (QPTAS)~\cite{hohn_how_2014}. On the other hand, Carnes and Shmoys~\cite{carnes_primal_dual_2015} gave an explicit description of the primal-dual method, obtaining a 2-approximation for Knapsack-Cover, the Single-Demand Facility Location problem, and the more general Single-item Lot-Sizing problem with Linear Holding Costs.   
 Cheung et al.~\cite{cheung_primal_dual_2016} consider the Generalized Min-Sum Scheduling problem on a single machine without release dates; they obtain a $(4+\varepsilon)$-approximation algorithm based on the primal-dual framework on an LP with knapsack-cover inequalities. Finally, McCormick et al.~\cite{mccormick_primal_dual_2017} consider covering problems with precedence constraints, where they are able to give a primal-dual algorithm with approximation ratio equal to the width of the precedence relations. We remark that \NLKC{} can be modeled within this framework, but applying this result yields an unbounded approximation guarantee in this case.
 
Outside the primal-dual framework, there is also a rich literature on the use of the knapsack-cover inequalities and its generalizations together with rounding techniques. The problems considered include the Min-Sum General Scheduling problem on a single~\cite{bansal_geometry_2014} and multiple~\cite{moseley_scheduling_2019,bansal_geometry_2019} machines, the Uniform~\cite{levi_approximation_2008} and Non-Uniform~\cite{li_constant_2017} Capacitated Multi-item Lot-sizing problem, and Capacitated Facility Location~\cite{an_lp-based_2017}. On the other hand, it is not known if there exists a compact set of constraints matching the strength of the knapsack-cover inequalities. Recently, Bazzi et al.~\cite{bazzi_small_2018} gave a formulation with an integrality gap of $2+\varepsilon$ for the Knapsack-Cover problem with a quasipolinomial number of inequalities.

It is also worth mentioning that the most common technique for dealing with non-linear cost functions in capacitated covering problems is a doubling technique: split the cost function in segments where the function doubles. Then, each segment can be considered independently as a single item. This removes the precedence dependence between different segments, at a cost of losing a factor 4 in the approximation ratio; see for example~\cite{bansal_geometry_2014}. Our approach strengthen the knapsack-cover inequalities and allows to avoid the extra factor lost. 


\noindent\emph{Our Contribution.} 
Let $z_{ij}$ be a binary variable that represents whether $x_i \geq j$ in the solution, i.e., if the $j$-th unitary \emph{segment} of item $i$ is \emph{taken}. Defining $g_{ij}=f_i(j)-f_i(j-1)$, then the cost of any solution is $\sum_{ij}g_{ij}z_{ij}$, and for any solution to be feasible it must hold for all~$i,j$ that if $z_{ij}=1$ then $z_{i,j-1}=1$ . In a greedy algorithm, one might be tempted to take segments with low $g_{ij}$. This fails as such segments might be preceded by another segment with $g_{ik}\gg g_{ij}$ for $k< j$. 
This poses two fundamental questions when assessing the value of a segment: (i) how to take into consideration (mandatory) preceding segments of high cost? (ii) how to take into account low costs segments to the right, specially considering segments that finally might not be part of the final solution (since the demand can be completely covered by previous segments)?

We introduce a natural variant of the knapsack-cover inequalities for non-linear cost functions. These generalize the basic version of the inequalities, as well the generalization of Carnes and Shmoys~\cite{carnes_primal_dual_2015} for the Single-Demand Facility Location Problem. Our inequalities are then used to derive a primal-dual algorithm that helps to handle the fundamental questions above. Our algorithm can be interpreted as a water-filling algorithm. 
Each segment~$j$ of an item~$i$ has a corresponding \emph{bucket} $B_{ij}$ of capacity $g_{ij}$, representing an inequality in the dual linear program. 
All buckets for a given item $i$ are placed on a stairway, where bucket $B_{ij}$ is on the $j$-th step of the stairs. A segment is taken, i.e. we set $z_{ij}=1$, if its corresponding bucket and all previous ones (which are in lower steps of the stairs) are full. Water reaches buckets through two mechanisms. Water from an external source is poured directly into each bucket at a rate of either 1 or 0 (units of water per time unit). The first time a bucket $B_{ij}$ becomes full, then the water arriving to this bucket spills to bucket $B_{i,j-1}$, which now fills at a rate of 2 (as long as $B_{ij}$ is still receiving water from the external source). If $B_{i,j-1}$ also becomes full and $j>2$, then the water pouring into $B_{ij}$ and $B_{i,j-1}$ spills to $B_{i,j-2}$ which now fills at a rate of 3, etc. For a bucket to receive water from the external source it must satisfy two properties: (i) its corresponding segment has not been taken yet in the primal solution, and (ii) all previous segments of the item are not enough to cover the remaining demand. 
Our primal-dual algorithm helps to take care of the tensions implied by the questions above by making buckets filling faster due to water spilled from higher buckets, and prevents spilling water from a bucket if they are so high that they are useless to help covering the remaining demand. 

For the case of \NLUFP, our algorithm works similarly. However, the primal solution constructed with the algorithm can contain redundant segments due to sub-intervals of $I$ that can be covered in subsequent steps of the algorithm. For this reason we need to perform a \emph{reverse-delete} (or pruning) strategy to remove unnecessary segments, in the reverse order in which they were introduced in the primal solution. 

We notice that, for both algorithms, our analysis is tight as they achieve the same performance guarantee as their classic variants~\cite{carnes_primal_dual_2015,cheung_primal_dual_2016}. Additionally, the integrality gap of our formulation for the \NLKC{} problem is also 2, as the same lower bound of the the classical setting holds~\cite{carr_strengthening_2000}. 

Finally, we show an iterative rounding technique for the LP relaxation of the \NLKC{} problem. The rounding first round up variables larger than $1/2$ to 1, and then considers the residual problem and a simplified LP relaxation. This relaxation has only few inequalities and an extreme point can be easily rounded as we can show most of its variables are integral.

\section{A Generalization of the Knapsack-Cover Inequalities for Non-Linear Knapsack-Cover} \label{Sec KP}



We first study the \NLKC{} problem. Recall that in this setting we consider a set $N$ of $n$ items, each with a non-decreasing function $f_i$ 
and a demand $D\in\mathbb{N}$. We assume that all functions $f_i$'s are  defined over a common domain $\{0,1,\ldots,m\}$, for some $m\le D$. Hence, each item $i\in N$ has $m$ \emph{segments} of unit length, indexed by a common set $M=\{1,\ldots, m\}$, each having a unit cost $g_{ij}=f_i(j)-f_i(j-1) \geq 0$. In what follows we assume that our instance admits a feasible solution. 
We start by considering the list model.

It is worth mentioning that the problem described can be solved in polynomial time (respectively pseudo-polynomial) in the list (respectively oracle) model by a straightforward adaptation of the classical dynamic program for Knapsack. In the oracle model the problem is (weakly) NP-hard as it contains Knapsack-Cover as a special case. For this model the dynamic program can be turned into an FPTAS also by adapting well known rounding  techniques~\cite{williamson_design_2011}. However, these techniques cannot handle \NLUFP{}.

\subsection{Knapsack-Cover Inequalities for Non-Linear Costs}
To write a linear relaxation of this problem, consider $a\in \{0,\ldots,m\}^N$, where $a_i$ represents that all segments $j\in\{1,\ldots, a_i\}$ have been taken already for item $i\in N$ (and $a_i=0$ represents that no segment of $i$ is taken yet). We face the residual problem, where we must decide about segments not taken yet, and we must cover the residual demand
 $D(a):=\max \{D-\sum_{i\in N}a_i,0\}$.

We now relax the condition that $z_{ij}=1$ implies $z_{ik}=1$ for $k<j$. To do so, note that in a feasible solution variable $z_{ij}$ should never be larger than $\min\left\lbrace z_{ij},z_{i,j-1},\ldots,z_{i1} \right\rbrace$, and hence we can replace in our formulation the appearance of $z_{ij}$ by this minimum. Additionally, in order to cover the residual demand $D(a)$, an optimal feasible solution will never set a variable $z_{ij}$ to 1 if $j>a_i+D(a)$. Hence, for item $i$ we can only take up to segment $m_i(a):= \min\{m,a_i+D(a)\}$. We conclude that the following is a relaxation of the \NLKC{} problem, which we call [GKC]:
\begin{align}
\nonumber
 & \min \sum_{i \in N,j\in M} g_{ij}z_{ij}\\
\label{eq:GKC-Convex}
& \sum_{i\in N}\sum_{j= a_i+1}^{m_i(a)} \min\left\lbrace z_{ij},z_{i,j-1},\ldots,z_{i1} \right\rbrace \ge D(a) &\text{for all }a\in \{0,\ldots,m\}^N,\\
\nonumber
&z_{ij}\ge 0 & \text{for all }i\in N, j\in M.
\end{align}
We call the set of inequalities~\eqref{eq:GKC-Convex} the \emph{knapsack-cover inequalities for non-linear costs}. 
This relaxation can be easily linearized. Indeed, if a program has a constraint of the form $\min\{x_1,x_2\}\ge b$, then we can replace it with $x_1\ge b$ and $x_2\ge b$. More generally, if the constraint is $\min\{x_1,x_2\} + \min\{x_3,x_4\}\ge b$, then we 
must consider all constraints $x_i+x_j\ge b$ for all $i\in\{1,2\}$ and $j\in\{3,4\}$. 
Here, we can replace each convex inequality in \eqref{eq:GKC-Convex} with (exponentially) many linear ones. The linear inequalities can be constructed by replacing each summand $\min\left\lbrace z_{ij},z_{i,j-1},\ldots,z_{i1} \right\rbrace$ in \eqref{eq:GKC-Convex} by one of its terms $z_{ij},z_{i,j-1},\ldots,z_{i1}$. Each new linear inequality will be indexed by a pair $(a,F)$, with $F=(F^k_i)_{i\in N,k\in\{0,...,m-1\}}$. Each $F^k_i$ is a set of indices, such that $z_{i,j-k}$ is chosen from $\min \{z_{ij},z_{i,j-1},\ldots,z_{i1} \}$ iff $j \in F^k_i$. We also require that $(F_i^k)_{k=0}^{m-1}$ is a partition of $\{a_i+1,\ldots,m\}$, and $j\in F_i^k$ implies that $j-k\ge1$. Then, an inequality in~\eqref{eq:GKC-Convex} indexed by a given vector $a$ can be replaced with the following set of constraints:
\begin{equation}
\label{eq:linearized}
\sum_{i\in N}\sum_{k=0}^{m-1} \sum_{j\in F_i^k:j\le m_i(a)} z_{i,j-k}\ge D(a) \qquad \text{for all }F. 
\end{equation}
Let us consider now a given pair $(a,F)$, an item $i$, and $j>a_i$. The term $z_{ij}$ might appear several times in the respective constraint~\eqref{eq:linearized}, depending on how many ``minimums'' are replaced by it. If $k \in F_i^{k-j}$, then $\min \{z_{ik},z_{i,k-1},\ldots,z_{i1} \}$ is replaced by $z_{ij}$. 
Moreover, recall that the residual demand $D(a)$ will never be covered by a segment $z_{ij}$ for $j>m_i(a)$, and hence the number of times that $z_{ij}$ appears in the left-hand-side of the inequality is
\begin{equation}
    \tau(i,j,a,F)=\lvert\{k \geq j: k\in F_i^{k-j}, k \leq m_i(a)\} \rvert.
\end{equation}
With this, we obtain the following relaxation, which is equivalent to [GKC],

\begin{align*}
 \text{[P-GKC]:} \min &\sum_{i\in N}\sum_{j\in M} z_{ij}g_{ij}\\
\text{s.t. } & \sum_{i\in N}
\sum_{j=a_i+1}^{m} \tau(i,j,a,F) \cdot z_{ij}\geq D(a) &  \text{for all } (a,F) \in \mathcal{F}, \label{rest:LP_3}\\
&z \geq 0 & 
\end{align*}
where
\[
 \mathcal{F}=\left\lbrace (a,F):\substack{\displaystyle a\in \{0,\ldots,m\}^n,\, \bigcupdot_{k=0}^{m-1}F_i^k= \{a_i+1,\ldots,m\},\\\displaystyle\text{ and if }  j\in F_i^k \text{ then } j-k\ge 1 \text{ for all } k, j} 
\right\rbrace.
\]
A routinary computation yields that the dual of this linear program is as follows.
\begin{align}
\nonumber \text{[D-GKC]:} \max&\sum_{(
a,F)\in \mathcal{F}}D(a)v_{aF}\\
\text{s.t. } & \sum_{(a,F)\in \mathcal{F}: a_i<j} \tau(i,j,a,F)\cdot v_{aF} \le g_{ij} & \text{for all } i\in N, j \in M, \label{rest:DP_3}\\
\nonumber &  v\geq 0. & 
\end{align}
\vspace{-1cm}
\subsection{A 2-approximate Primal-Dual Algorithm}\label{sec:Primal-DualKC}

We provide a primal-dual 2-approximation algorithm based on the LP-relaxation [P-GKC] and its dual [D-GKC]. It is worth having in mind the bucket representation of the algorithm given above in the introduction. 

\noindent\textbf{Algorithm description.} 
 The water-filling algorithm described in the introduction is an intuitive representation of a greedy algorithm for the dual [D-GKC]. Each bucket has a correspondence to a dual inequality: Each of the inequalities in the dual [D-GKC] represents a bucket, the left hand size corresponds to the amount of water in the bucket, while the right hand side is its capacity. The greedy dual algorithm raises dual variables one by one, starting from a dual solution $v\equiv 0$, implying the increase on water on the buckets. In each iteration of the main loop, we raise a variable $v_{aF}$. The index $a$ is chosen such that $a_i$ represents the largest value $\ell$ for which all buckets $B_{i1},\ldots,B_{i\ell}$ are full (or equivalently, $z_{i1}=\ldots=z_{i\ell}=1$), for each $i\in N$. To choose $F$, a segment $j$ will belong to $F_i^k$ if and only if the water from the external source falling into bucket $B_{ij}$ (if any) spills down to bucket $B_{i,j-k}$. Number $k$ is chosen such that it is the smallest number for which $B_{i,j-k}$ is not full, representing the idea that the water of full buckets falls down to the previous buckets on the stairs. Also, buckets receiving water from the external source are the buckets $B_{ij}$ with $a_i+1\le j\le m_i(a)$. This way, $\tau(i,j,a,F)$ corresponds to the filling rates of bucket $B_{ij}$ in the current iteration, which considers the water directly from the external source and the water spilled from higher buckets. We stop raising variable $v_{aF}$ as soon as one dual inequality becomes tight, i.e., some bucket $B_{ij}$ becomes full. After, we update the value of $z$ by setting $z_{ik}=1$ if $B_{i\ell}$ is full for all $\ell\le k$. Remark that there we do not require that $k\le m_i(a)$, and hence the returned primal solution might not satisfy the total demand exactly. Finally, we update $a$ and $F$ as described above and repeat the main loop until the residual demand $D(a)$ reaches $0$. The precise definition in pseudocode is given in Algorithm~\ref{alg:Primal-DualKC}, which uses Algorithm~\ref{alg:Update} as a sub-routine in Line 14. 
\begin{algorithm}
\caption{Primal-Dual Water-Filling Algorithm for the Knapsack-Cover Problem with Non-Linear Costs.}\label{alg:Primal-DualKC}
\begin{algorithmic}[1]
\STATE $z$, $v$ $\leftarrow$ 0; \Comment{primal and dual solutions.}
\STATE $a \leftarrow 0$; \Comment{$a_i$ represents the largest value for which buckets $B_{i1},\ldots,B_{i,a_i}$ are full.}
\STATE $F_i^k\leftarrow \emptyset \ \ \forall i\in N,k\in M$;
\STATE $F_i^0 \leftarrow M$; \Comment{$j\in F_i^k$ iff  water from bucket $B_{ij}$ falls to bucket $B_{i,j-k}$.}

\WHILE{ $D(a)>0$} \label{ln:MainLoopKC}
\STATE Increase $v_{aF}$ until a dual constraint indexed by $(i,j)$, for some item $i\in N$ and segment $j\in M$, becomes tight. 
\Comment{Bucket $B_{ij}$ becomes full.} 

\Comment{Update $z_{ij}$:}
\IF{$j=a_i+1$}\label{ln:IF}

	\STATE Let $q>a_i$ be the maximum number such that $B_{i,a_i+1},\ldots,B_{iq}$ are full.\label{line:k}
	\FOR{$\ell=j,\ldots,q$}
		\STATE $z_{i\ell} \leftarrow 1$;	\Comment{Take available segments.}
	\ENDFOR
	\STATE $a_i \leftarrow q$;
\ELSE \Comment{If we cannot raise variable $z_{ij}$:}
    \STATE $F \leftarrow  \text{Update}(F,i,j,a)$ \Comment{Call Algorithm 2 to update the sets $F$}
\ENDIF\label{ln:endIF}
\ENDWHILE
\STATE \textbf{return} $v,z$.
\end{algorithmic}
\end{algorithm}

\begin{algorithm}
\caption{$\text{Update}(F,i,j,a)$: Updating Buckets Subroutine}
\label{alg:Update}
\begin{algorithmic}
\STATE \textbf{input} $a,i,j,F$ \Comment{$a,F$ represent the current state of the buckets, $i,j$ represent which is the bucket that just got full. We require  $j>a_i+1$}.
\STATE Let $p<j$ be the maximum number so that $B_{ip}$ is not full.
\STATE Let $q>j$ be the minimum number so that $B_{iq}$ is not full (and $q=m+1$ if $B_{ij},\ldots,B_{im}$ are all full).
\FOR{$\ell=j,j+1,\ldots,q-1$}
		\STATE $F_i^{\ell-j} \leftarrow F_i^{\ell-j}\setminus \{\ell\}$ and  $F_i^{\ell-p} \leftarrow F_i^{\ell-p}\cup \{\ell\}$ \Comment{The water from the external source falling to $B_{i\ell}$, which was previously spilling to bucket $B_{ij}$, now spills to bucket $B_{ip}$.}
\ENDFOR 
\STATE \textbf{return} $F$
\end{algorithmic}
\end{algorithm}
\noindent\textbf{Analysis.} 
The algorithm terminates, as each iteration of the main loop (Line~5) corresponds to some bucket that becomes full,
so we enter the while loop at most $nm$ times. Therefore the algorithm runs in polynomial time in the list model. 
The main challenge is to show that the algorithm is 2-approximate.

It follows directly that the dual solution constructed is kept feasible through the execution of the algorithm. The fact that the primal solution is feasible follows since we kept iterating the main loop until the residual demand $D(a)$ is zero. As in most approximate primal-dual algorithms, the crux of the analysis is to show that an approximate form of the complementary slackness conditions are satisfied. This is summarized in the next lemma.

\begin{lemma}\label{lm:AproxCompSlackness} Let $\bar{v},\bar{z}$ be the primal and dual solutions computed by the algorithm. Then it holds that 
\[
 \sum_{i\in N}\sum_{j= a_i+1}^m \tau(i,j,a,F)\bar{z}_{ij} \le 2D(a),
\]
for all $(a,F)\in \mathcal{F}$ such that $\bar{v}_{aF}>0$.
\end{lemma}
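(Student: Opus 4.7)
The plan is to fix $(a,F) \in \mathcal{F}$ with $\bar v_{aF} > 0$ and analyze it in the context of the iteration sequence produced by Algorithm~\ref{alg:Primal-DualKC}. Since the main loop visits each state at most once, there is a unique iteration $t^*$ with $(a^{(t^*)},F^{(t^*)}) = (a,F)$; write $(a^{(t)},F^{(t)})$ for the state at iteration $t \ge t^*$, and let $x_i$ be the largest index such that $\bar z_{ij}=1$ for every $j \le x_i$, so that the quantity to bound equals $\sum_i A_i$ with $A_i := \sum_{j=a_i+1}^{x_i}\tau(i,j,a,F)$.

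First I would re-interpret $\tau(i,j,a,F)$ via water routing: $A_i$ counts the source indices $k \in [a_i+1, m_i(a)]$ whose spilled water lands in a bucket $B_{i,\pi(i,k)}$ with $\pi(i,k)\le x_i$, where $\pi(i,k)$ is the largest index of a non-full bucket at most $k$ at time $t^*$. I claim
\[
A_i \;=\; \min\bigl\{x_i - a_i,\; m_i(a) - a_i\bigr\}.
\]
The bound by $m_i(a)-a_i$ is immediate. For the bound by $x_i - a_i$, note that if $x_i < m$ then $B_{i,x_i+1}$ is not part of the final chain of contiguous full buckets, hence is not full at the end, and therefore (by monotonicity of the ``full'' property) not full at time $t^*$ either. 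Consequently any source $k > x_i$ satisfies $\pi(i,k)\ge x_i+1$ and contributes nothing to $A_i$; conversely, every source $k \in [a_i+1, \min(x_i,m_i(a))]$ contributes, since $\pi(i,k)\le k$ and the smallest non-full index is $a_i+1$. In particular, $A_i \le m_i(a) - a_i \le D(a)$.

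Next I would identify the final IF-branch execution: since $D(a^{(t)})$ strictly decreases only in IF steps and the algorithm halts as soon as $D$ reaches $0$, there is a unique iteration $T$ whose IF branch brings the residual demand to zero. Let $i^{**}$ be the item whose chain is extended at iteration $T$; for every other item $i\ne i^{**}$ the coordinate $a_i$ is unchanged at that step, so $x_i = a_i^{(T-1)}$. Using $\sum_{i\in N} a_i^{(t)} = D - D(a^{(t)})$, valid whenever $D(a^{(t)})>0$ (which holds at both $t^*$ and $T-1$), I would bound
\[
\sum_{i\ne i^{**}} A_i \;\le\; \sum_{i\ne i^{**}} (x_i - a_i^{(t^*)}) \;\le\; \sum_{i\in N}\bigl(a_i^{(T-1)} - a_i^{(t^*)}\bigr) \;=\; D(a) - D(a^{(T-1)}) \;\le\; D(a),
\]
and combine with $A_{i^{**}} \le D(a)$ to obtain $\sum_i A_i \le 2D(a)$, as required.

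The main obstacle I expect is the water-routing computation of $A_i$: showing that $B_{i,x_i+1}$ is non-full at time $t^*$ requires carefully combining the definition of $x_i$ (the largest chain-filled index of the \emph{final} primal solution) with the fact that the ``full'' status of a bucket is monotone throughout the algorithm's execution. Once this geometric observation is in place, the remainder is a clean two-part charge against $D(a)$: one copy is absorbed by the chain extensions occurring strictly before the final IF step, and a second copy by the (possibly overshooting) final step itself.
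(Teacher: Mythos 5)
Your proposal is correct and follows essentially the same route as the paper's proof: you split the sum into the item extended in the final iteration (bounded by the number of external sources, $m_i(a)-a_i\le D(a)$) and all other items (bounded by the number of taken segments above $a$, which is at most $D(a)$ since the residual demand was still positive before the last step). Your version merely makes the paper's informal steps explicit, via the formula $A_i=\min\{x_i-a_i,\,m_i(a)-a_i\}$ and the telescoping identity $\sum_i(a_i^{(T-1)}-a_i^{(t^*)})=D(a)-D(a^{(T-1)})$.
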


Before showing this lemma, let us show how to use it for the proof of the main theorem.

\begin{theorem}\label{thm:PrimalDual2}
Algorithm~\ref{alg:Primal-DualKC} is a 2-approximation for the Knapsack-Cover Problem with Non-Linear Costs.
\end{theorem}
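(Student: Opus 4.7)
The plan is to follow the standard primal-dual skeleton: bound the primal cost by the dual objective times~$2$, then invoke weak duality. Concretely, let $\bar z,\bar v$ be the primal and dual solutions returned by Algorithm~\ref{alg:Primal-DualKC}. Both are feasible for their respective programs: $\bar v$ is raised only as long as no dual constraint is violated, and $\bar z$ is feasible because the main loop only terminates when the residual demand $D(a)$ reaches $0$, and at that point the integral solution $\bar z$ covers $D$.

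The first key observation is that for every index $(i,j)$ with $\bar z_{ij}=1$, the dual constraint~\eqref{rest:DP_3} associated with $(i,j)$ holds with equality. Indeed, inspection of Lines~7--12 of Algorithm~\ref{alg:Primal-DualKC} shows that a segment is ever set to $1$ only when its bucket $B_{ij}$ is full, and the algorithm raises $v_{aF}$ precisely until some bucket becomes full, i.e., its dual constraint becomes tight. Hence for every such $(i,j)$,
\begin{equation*}
g_{ij} \;=\; \sum_{(a,F)\in\mathcal{F}:\,a_i<j}\tau(i,j,a,F)\,\bar v_{aF}.
\end{equation*}

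Next I would plug this into the primal cost and switch the order of summation:
\begin{align*}
\sum_{i\in N}\sum_{j\in M} g_{ij}\bar z_{ij}
&= \sum_{(i,j):\,\bar z_{ij}=1}\ \sum_{(a,F)\in\mathcal{F}:\,a_i<j}\tau(i,j,a,F)\,\bar v_{aF}\\
&= \sum_{(a,F)\in\mathcal{F}} \bar v_{aF}\ \sum_{i\in N}\sum_{j=a_i+1}^{m}\tau(i,j,a,F)\,\bar z_{ij}.
\end{align*}
For every pair $(a,F)$ with $\bar v_{aF}>0$, Lemma~\ref{lm:AproxCompSlackness} bounds the inner sum by $2D(a)$, so
\begin{equation*}
\sum_{i\in N}\sum_{j\in M} g_{ij}\bar z_{ij} \;\le\; 2\sum_{(a,F)\in\mathcal{F}} D(a)\,\bar v_{aF}.
\end{equation*}

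Finally, I would invoke weak duality: since [P-GKC] is a valid relaxation of \NLKC{} and $\bar v$ is feasible for [D-GKC], the dual objective $\sum_{(a,F)} D(a)\bar v_{aF}$ is a lower bound on the optimum integral value. Combining this with the previous inequality yields a $2$-approximation. The only substantive step here is the tightness claim for the dual constraint at every taken segment; everything else is mechanical once Lemma~\ref{lm:AproxCompSlackness} is available, so the real work in this section lies in that lemma rather than in this theorem.
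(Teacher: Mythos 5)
Your proposal is correct and follows essentially the same route as the paper's proof: tightness of the dual constraint at every taken segment (every $\bar z_{ij}=1$ corresponds to a full bucket), interchange of summation, an application of Lemma~\ref{lm:AproxCompSlackness}, and weak duality. Your explicit remark that the real content lies in the lemma matches the paper's own framing.
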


\begin{proof}
 Let $\bar{v},\bar{z}$ be the primal and dual solutions computed by the algorithm. The cost of our solution is
 \begin{align*}
  \sum_{i\in N}\sum_{j\in M} \bar{z}_{ij} g_{ij}
&=\sum_{i\in N}\sum_{j\in M} \bar{z}_{ij}\left(\sum_{(a,F)\in \mathcal{F}: a_i<j} \bar{v}_{aF}\cdot \tau(i,j,a,F)\right),
 \end{align*}
 where the equality holds because $\bar{z}_{ij}>0$ implies that the corresponding bucket was taken, i.e., that the corresponding dual inequality became tight. Rearranging this sum we obtain
 \begin{align*}
     \sum_{(a,F) \in \mathcal{F}}\bar{v}_{aF}\cdot\sum_{i \in N} \sum_{j=a_i+1}^m  \tau(i,j,a,F) & \leq 2 \sum_{(a,F)\in \mathcal{F}}\bar{v}_{aF}D(a),
 \end{align*}
 where the inequality is a direct application of Lemma~\ref{lm:AproxCompSlackness}. This shows that the cost of the primal solution is at most twice the value of the dual solution. The theorem then follows from weak duality. \qed
\end{proof}

Hence, to show the theorem it suffices to prove Lemma~\ref{lm:AproxCompSlackness}. 
\begin{proof}[Lemma~\ref{lm:AproxCompSlackness}]
 
 Let $(\bar{\imath},\bar{\jmath})$ be the indices of the dual inequality that became tight in the last iteration of the algorithm, and let $\bar{z},\bar{v}$ be the output primal and dual solutions.

Let us fix a variable $v_{aF}>0$ and consider the iteration of the main loop of the algorithm where we were raising that variable.  We split the term to bound in two:

\[\sum_{i\in N}\sum_{j= a_i+1}^m \tau(i,j,a,F)\bar{z}_{ij} = \sum_{j= a_{\bar{\imath}}+1}^m \tau(\bar{\imath},j,a,F)\bar{z}_{\bar{\imath}j} +\sum_{i\in N\setminus\{\bar{\imath}\}}\sum_{j= a_i+1}^m \tau(i,j,a,F)\bar{z}_{ij}.\]
We will bound each term by $D(a)$. For a given item $i\in N$, the expression $\sum_{j\geq 1} \tau(i,j,a,F)\bar{z}_{ij}$ represents the total number of buckets that are receiving water from the external source and whose water is spilling to some bucket that ends up in the final solution.

Regarding item $\bar{\imath}$, notice that  $\sum_{j \geq 1}
 \tau(\bar{\imath},j,a,F)\bar{z}_{\bar{\imath}j}\leq D(a)$, just because the buckets obtaining water from the external source are in the interval $a_{\bar{\imath}}+1,\ldots,m_{\bar{\imath}}(a)$, which are at most $D(a)$ many.


Consider now $i\neq \bar{\imath}$ and a bucket $B_{ij}$ that is ``part''of $\tau(i,j',a,F)$ for some segment~$j'$ included in the final solution, that is, either $j'=j$ or $B_{ij}$ is pouring into $B_{ij'}$, case in which all the buckets between $j$ and $j'$ are full in this iteration of the algorithm. Then by construction $B_{ij}$ will be taken as well ($\bar{z}_{ij}=1$). 
Additionally, no water (either directly or indirectly) reaches a bucket $B_{ik}$ with $k\le a_i$, and hence $\tau(i,k,a,F)=0$.
So the quantity $\sum_{i\in N\setminus \{\bar{\imath}\}}\sum_{j\geq 1} \tau(i,j,a,F)\bar{z}_{ij}$ is upper bounded by the total number of buckets in the final solution, of items other than $\bar{\imath}$, that are above $a$. This number cannot be higher than $D(a)$, otherwise the algorithm would have finished before filling the last bucket $B_{\bar{\imath},\bar{\jmath}}$.\qed 

\end{proof}

\textbf{Remark}: 
As explained above, the intuition behind the algorithm is that having cheap high segments increase the filling rate of the lower segments. Note that the algorithm does not give an optimum precisely because of this ``promise'': in the last iteration, we are not taking all the full buckets that were pouring to the first non-full bucket; these full buckets that are not taken represent that we are taking into account some future cheap buckets that will not be part of the solution (like unfulfilled promises). Nevertheless, this only happens in the last iteration, which is why the algorithm achieves a 2-approximation.

\subsection{A Rounding Prodecedure}

In this section we show a rounding procedure that takes a solution $z$ of [GKC] (or equivalently [P-GKC]) and returns a feasible solution for the Knapsack-Cover problem with Non-Linear Costs with cost at most twice the cost of $z$. We also give a polynomial time separation algorithm for [GKC], in the list model, by reducing the separation problem to an instance of a problem similar to \NLKC{}. Although this seems circular, as one needs to solve the same problem we are aiming to solve, the separation routine and the rounding technique might be useful for more general problems.


We first start with the following simple structural observation about solutions of~[GKC]\footnote{This lemma implies that if in [GKC] we replace each term $\min\left\lbrace z_{ij},z_{i,j-1},\ldots,z_{i1} \right\rbrace $ by $z_{ij}$ and add inequalities  $1\ge {z}_{i1} \ge \ldots \ge {z}_{im}\ge 0$ for all $i\in N$, we obtain an equivalent relaxation. However, the new inequalities yield a different structure of the dual problem, which does not allow for a greedy dual algorithm as in Section~\ref{sec:Primal-DualKC}.}.

\begin{lemma}\label{lm:OptLPStructure}
 Let $z$ be a solution to [GKC]. There exists a polynomial time procedure to create a new solution $\bar{z}$ to [GKC] whose cost is less or equal the cost of $z$ and that satisfies that $1\ge \bar{z}_{i1} \ge \ldots \ge \bar{z}_{im}\ge0$ for all $i\in N$.
\end{lemma}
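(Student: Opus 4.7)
The plan is to define $\bar z$ in closed form and then verify the required properties one by one. I would set
\[
\bar z_{ij} \;:=\; \min\{1,\, z_{i1},\, z_{i2},\, \ldots,\, z_{ij}\}
\]
for every $i \in N$ and $j \in M$; this is computable in $O(nm)$ time via a running minimum over $j$ for each fixed $i$. The chain $1 \ge \bar z_{i1} \ge \bar z_{i2} \ge \ldots \ge \bar z_{im} \ge 0$ is immediate, since enlarging the index set of a minimum can only decrease its value and $z \ge 0$ by feasibility of $z$. The cost does not increase either: $\bar z_{ij} \le z_{ij}$ componentwise, and $g_{ij} = f_i(j) - f_i(j-1) \ge 0$ because $f_i$ is non-decreasing.

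It remains to verify that $\bar z$ is feasible for [GKC], which I would split into two conceptual steps. First consider the auxiliary vector $y_{ij} := \min\{z_{i1},\ldots,z_{ij}\}$, so that $\bar z_{ij} = \min\{1, y_{ij}\}$. For any $a \in \{0,\ldots,m\}^N$, since $y$ is non-increasing in $j$, the left-hand side of~\eqref{eq:GKC-Convex} evaluated at $y$ reduces to $\sum_i \sum_{j=a_i+1}^{m_i(a)} y_{ij}$; moreover $y_{ij} = \min\{z_{ij}, z_{i,j-1},\ldots,z_{i1}\}$, so this LHS coincides termwise with the one attained by $z$ itself and is thus $\ge D(a)$. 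Hence $y$ is already feasible for [GKC].

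The main obstacle is the second step: showing that truncating $y$ at $1$ to obtain $\bar z$ does not violate any constraint, since capping individual summands can in principle push the LHS below $D(a)$. Fix $a$ with $D(a) > 0$. For each item $i$, let $s_i$ denote the largest index in $[a_i+1, m_i(a)]$ with $y_{i s_i} > 1$, or $s_i := a_i$ if no such index exists. Monotonicity of $y$ makes $\{a_i+1,\ldots,s_i\}$ a prefix on which $\bar z_{ij} = 1$, while $\bar z_{ij} = y_{ij}$ for $j > s_i$. Since $\bar z$ is non-increasing, the LHS of~\eqref{eq:GKC-Convex} at $\bar z$ equals $\sum_i (s_i - a_i) + \sum_i \sum_{j=s_i+1}^{m_i(a)} y_{ij}$. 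To lower-bound the last sum I would invoke the [GKC] constraint indexed by $\tilde a$ with $\tilde a_i := s_i$. A routine check shows $\tilde a \in \{0,\ldots,m\}^N$, $m_i(\tilde a) \le m_i(a)$, and $D(\tilde a) \ge D(a) - \sum_i (s_i - a_i)$ (and in the easy subcase $\sum_i (s_i - a_i) \ge D(a)$ the capped contribution alone already gives LHS $\ge D(a)$). The constraint at $y$ then yields $\sum_i \sum_{j=s_i+1}^{m_i(\tilde a)} y_{ij} \ge D(\tilde a)$, and enlarging the inner range up to $m_i(a)$ by non-negativity gives $\sum_i \sum_{j=s_i+1}^{m_i(a)} y_{ij} \ge D(a) - \sum_i (s_i - a_i)$. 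Adding back $\sum_i (s_i - a_i)$ produces LHS $\ge D(a)$, as required.
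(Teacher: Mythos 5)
Your proposal is correct and follows essentially the same route as the paper: take the running minimum $\min\{z_{i1},\ldots,z_{ij}\}$ (which is feasible because the constraints of [GKC] already only see this minimum), then cap at $1$ and justify the cap by invoking the [GKC] constraint indexed by the shifted vector $\tilde a$ that skips the capped prefix. Your write-up is in fact somewhat more complete than the paper's, which caps one item at a time and only verifies the constraint for $a=0$ in detail, leaving general $a$ as ``analogous''; your general-$a$, all-items-at-once verification (including the checks $m_i(\tilde a)\le m_i(a)$ and $D(\tilde a)\ge D(a)-\sum_i(s_i-a_i)$) is sound.
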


\begin{proof} 
%
%
 For any feasible $z$ we define $\bar{z}_{ij}=\min\{z_{ij},z_{i,j-1},\ldots,z_{i1}\}$ for all $i,j$.  It is straightforward to check that $\bar{z}$ is also feasible, and $\bar{z}\le z$, which implies that its cost did not increase. This directly implies that $\bar{z}_{i1} \ge \ldots \ge \bar{z}_{im}\ge0$ for all $i\in N$.
 
  
  It is left to argue that we can assume $\bar{z}_{i1}\le 1$. For a given $\ell\in N$, let $k$ be the largest index such that $\bar{z}_{\ell1}\ge \ldots,\bar{z}_{\ell k}>1$. We claim that reassigning those variables a value of 1 implies that $\sum_{ij}\bar{z}_{ij}\ge D$. Indeed, consider $a$ defined as $a_\ell=k$ and $a_i=0$ for $i\neq \ell$. Then 
  \[
   \sum_{i\in N} \sum_{j=a_i+1}^{m} \bar{z}_{ij}
   \ge \sum_{i\in N} \sum_{j=1}^{m_i(a)} \bar{z}_{ij} \ge D(a) \ge D- k,
  \] 
  were the second inequality is implied as $\bar{z}$ is feasible for [GKC]. This implies our claim as the left hand side of this expression equals $\sum_{j=k+1}^{m} \bar{z}_{\ell j}+\sum_{i\in N\setminus\{\ell\}} \sum_{j=1}^{m} \bar{z}_{ij}$. As the new solution, after replacing the variables larger than 1, still satisfies $\bar{z}_{i1} \ge \ldots \ge \bar{z}_{im}\ge0$, then $\bar{z}_{ij}=\min\{\bar{z}_{ij},\bar{z}_{i,j-1},\ldots,\bar{z}_{i1}\}$, we conclude that the new solution satisfies \eqref{eq:GKC-Convex} for $a=0$. The proof of other values of $a$ is analogous.
\end{proof}

Let us consider an optimal solution $\bar{z}$ of [GKC] satisfying the property of this lemma. 
 For the variables such that $\bar{z}_{ij}\ge 1/2$, we can simply round them up to $1$, which increases their contribution to the objective value by at most a factor~2. For the residual problem, the basic idea is to double the other variables and cover twice the residual capacity. 
 More precisely, let $\bar{a}_i= \max\{j\in M: \bar{z}_{ij}\ge 1/2\}$ for all $i\in N$. For ease of notation let us call $\bar{D}=D(\bar{a})$ and $\bar{m}_i=m_i(\bar{a})$ for all $i$. By Lemma~\ref{lm:OptLPStructure}, we have that $\bar{z}_{ij}\ge 1/2$ iff $j\le \bar{a}_i$. With this we define the following \emph{residual problem} with duplicate demand, 
 \begin{align}
\nonumber
\text{[R-GKC] }& \min \sum_{i \in I}\sum_{j=\bar{a}_i+1}^{\bar{m}_i} g_{ij}z_{ij}\\
\label{eq:GKC-F}
& \sum_{i\in N}\sum_{j=\bar{a}_i+1}^{\bar{m}_i} z_{ij} \ge 2\bar{D},\\
\label{eq:LPOrder}
& 1\ge z_{i,\bar{a}_i+1} \ge z_{i,\bar{a}_i+2} \ge\ldots \ge z_{i,\bar{m}_i} \ge 0 &\text{for all }i\in N.
\end{align}
 
Notice that $(2\bar{z}_{ij})_{i\in N, j\in \{\bar{a}_i+1,\ldots\bar{m}_i\}}$ is a feasible solution to this problem, as $\bar{z}_{ij}=\min\{\bar{z}_{ij},\bar{z}_{i,j-1},\ldots,\bar{z}_{i1}\}$. Also, we remark that, unlike [GKC], this linear program has only polynomially many constraints. We now show that an optimal extreme point of this program can be easily rounded to a feasible solution to the Knapsack Cover Problem with Non-Linear Cost. Let $z^*$ be an optimal extreme point solution to [R-GKC]

\begin{lemma} Consider an optimal extreme solution $z^*$ to [R-GKC]. Then there exists at most one item $i \in N\times M$ such that $z^{*}_{ij}\in (0,1)$ for some $j$. All other items $k\in N\setminus{i}$ satisfy that $z^*_{kj}\in \{0,1\}$ for all $j$.
\end{lemma}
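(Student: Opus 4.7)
The plan is to invoke the standard characterization of extreme points: an extreme point of [R-GKC] is uniquely determined by a collection of linearly independent active constraints, whose number must equal the number of variables. The constraints of [R-GKC] split into (a) the single covering inequality \eqref{eq:GKC-F}, and (b) the chain constraints \eqref{eq:LPOrder}, which only couple variables belonging to a common item $i$. I would exploit that (b) decouples across items to do a per-item dimension count.

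For each item $i$, at an extreme point $z^*$ the values $z^*_{i,\bar{a}_i+1} \ge \cdots \ge z^*_{i,\bar{m}_i}$ naturally partition into maximal \emph{blocks} of consecutive indices on which $z^*$ is constant. Let $s_i$ be the number of variables of item $i$, and group the blocks into three types: the (at most one) top block of variables equal to $1$, the (at most one) bottom block equal to $0$, and the \emph{fractional} blocks whose common value lies in $(0,1)$. Let $k_i$ denote the number of fractional blocks of item $i$. Counting tight constraints among \eqref{eq:LPOrder}: inside each block of length $\ell$ we get $\ell-1$ tight consecutive-equality constraints, and each $\{0,1\}$-block additionally contributes one tight bound constraint. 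A short check shows these constraints are linearly independent and altogether we obtain exactly $s_i - k_i$ linearly independent tight constraints from \eqref{eq:LPOrder} for item $i$. Moreover, any further constraint from (b) applied to item $i$ is dependent on the ones already chosen (it would force two distinct block values to coincide or a fractional block value to be $0$ or $1$, contradicting the definition of the blocks).

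Summing over $i$ yields $\sum_i (s_i-k_i) = n - K$ linearly independent tight constraints from (b), where $n=\sum_i s_i$ is the total number of variables and $K=\sum_i k_i$. An extreme point requires $n$ linearly independent tight constraints in total, so at least $K$ must come from elsewhere. The only remaining inequality is the covering constraint \eqref{eq:GKC-F}, contributing at most one linearly independent equation. Hence $K \le 1$: at most one item admits any fractional variable, and for that item all fractional variables share the same value. In particular, every other item $k$ satisfies $z^*_{kj}\in\{0,1\}$ for all $j$, which is precisely the lemma.

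The only delicate point is justifying that the per-item tight constraints I listed are linearly independent and that no other tight chain constraint adds rank; this is routine once one writes the incidence pattern of the chain inequalities on blocks, so I do not anticipate a real obstacle. The slight subtlety is checking that the covering constraint is linearly independent from the chain-induced equalities when $K=1$, which holds because the chain-induced equalities never combine into an equation that sums all free variables (they only equate variables within a single item).
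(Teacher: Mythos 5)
Your proof is correct. The rank/dimension count is sound: within each item $i$ the tight constraints among \eqref{eq:LPOrder} are exactly the consecutive equalities inside blocks plus the bound constraints for a leading all-$1$ block and a trailing all-$0$ block (the boundary inequalities between distinct blocks are strict, so they contribute nothing), and these have rank $s_i-k_i$; summing over items and adding at most one unit of rank for \eqref{eq:GKC-F} forces $\sum_i k_i\le 1$ at any extreme point. Note that the ``delicate point'' you flag at the end is not actually needed: for the inequality $K\le 1$ you only require that the covering constraint adds \emph{at most} one to the rank, which is trivial, so no independence check is required. Your route differs from the paper's in organization rather than in substance: the paper applies the vertex characterization twice, first to the LP restricted to a single item's variables (with the covering constraint frozen at $W_i$) to derive the $1$-block/fractional-block/$0$-block structure, and then to an aggregated LP [S-GKC] over the per-item block values $\omega_i$ to conclude that at most one $\omega_i$ is fractional. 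Your single global count of active constraints reaches the same conclusion (in fact the marginally stronger statement that there is at most one fractional block in total) in one step, and avoids the paper's lightly justified claim that $(w_i)$ is an optimal extreme point of [S-GKC]; the paper's two-stage version, in exchange, isolates the explicit parameters $r_i,R_i,w_i$ that are convenient to refer to in the subsequent rounding. One cosmetic remark: you reuse $n$ for the total number of variables, whereas in the paper $n=|N|$ is the number of items; rename it to avoid confusion.
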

\begin{proof}
 Let us fix a particular item $i \in N$ and let us consider the LP restricted to the corresponding vector $z_i = (z_{ij})_{j=\bar{a}_i+1}^m$, that is, we consider the LP:
 \begin{align}
\nonumber
\text{[R-GKC]}_i& \min \sum_{j=\bar{a}_i+1}^{\bar{m}_i} g_{ij}z_{ij}\\
\label{eq:R-GKC-F}
& \sum_{j=\bar{a}_i+1}^{\bar{m}_i} z_{ij} = W_i,\\
\label{eq:R-LPOrder}
& 1\ge z_{i,\bar{a}_i+1}\ge z_{i,\bar{a}_i+2} \ge\ldots \ge z_{i,\bar{m}_i} \ge 0,
\end{align}
 where $W_i= 2\bar{D} - \sum_{k\neq i}\sum_{j=\bar{a}_k+1}^{\bar{m}_k} z^*_{kj}$. 
 Clearly $z^*_i=(z^*_{ij})_j$ is optimal for this problem, and must also be an extreme point, as if it can be written as a convex combination of two different vectors then also $z^*$ can. As the LP has $s_i = \bar{m}_i-\bar{a}_i$ variables, then $s_i$ inequalities must be satisfied with equality, which implies that at least $s_i-1$ inequalities in~\eqref{eq:R-LPOrder} must be satisfied with equality. This implies that there exist parameters $\bar{a}_i+1\le r_i\le R_i\le \bar{m}_i$ such that $z^*_{ij}=1$ if $j<r_i$, 
 \[z^*_{i,r_i}=z^*_{i,r_i+1}=\ldots = z^*_{i,R_i},\]
 and $z^*_{ij}=0$ for $j>R_i$. Let us set $w_i=z^*_{i,r_i}$.
 
 To see that there is at most one fractional variable $w_i$, let us consider the following LP,
\begin{align} \textnormal{[S-GKC] } & \sum_{i\in N} \omega_i \bar{g}_i \\
& \sum_{i\in N} \omega_i \ge 2\bar{D}-\sum_{i\in N}(r_i-\bar{a}_i-1),\\
& 0 \le \omega_i\le 1 & \text{for all } i\in N. 
\end{align}
where $\bar{g}_i = \sum_{j=r_i}^{R_i} g_{ij}$. It is not hard to see that $(w_i)$ must be an optimal extreme solution to this LP, as otherwise $z^*$ would not be an optimal extreme solution to [R-GKC]. The lemma follows. As [S-GKC] has only one inequality besides $ w_i\in [0,1]$ for all $i$, we conclude that in a extreme optimal solution there is at most one fractional variable.\qed
\end{proof}

Finally, let $k\in N$ be the unique item with fractional variables in $z^*$ (if any). We will round to zero the variables $z^*_{kj}$ for this item. Putting the pieces together we show that the following solution is a 2-approximation

\[\hat{z}_{ij} = \begin{cases}
           1 & \text{for all }i\in N, j\in\{1,\ldots,\bar{a}_i\}, \\
           z^*_{ij} & \text{for all }i\in N\setminus\{k\}, j\in\{\bar{a}_i+1,\ldots,\bar{m}_i\},\\
           0 & \text{otherwise}.
          \end{cases}
\]

\begin{theorem}
 The construction solution $\hat{z}$ is an integral feasible solution to [GKC] that satisfies that $\sum_{ij}g_{ij}\hat{z}_{ij}\le 2\sum_{ij}g_{ij}\bar{z}_{ij}$ and thus it is a 2-approximate solution. 
\end{theorem}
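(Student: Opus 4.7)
The plan is to verify three properties of $\hat{z}$ in turn: integrality, feasibility for [GKC], and the factor-2 cost bound against $\bar{z}$.

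First, integrality is essentially immediate. By the preceding lemma, every coordinate of $z^*$ lies in $\{0,1\}$ except possibly those of the single item $k$, whose fractional coordinates are zeroed out in the definition of $\hat{z}$. Combined with the explicit $0/1$ values prescribed in the other cases of the definition, every $\hat{z}_{ij}$ lies in $\{0,1\}$. Moreover, for each fixed item $i$ the vector $(\hat{z}_{ij})_{j\in M}$ is of ``prefix-of-ones'' form: for $i=k$ this is obvious, and for $i\ne k$ the entries $\hat{z}_{ij}$ for $j\le\bar{a}_i$ are all $1$ while for $j\in\{\bar{a}_i+1,\dots,\bar{m}_i\}$ the values $z^*_{ij}\in\{0,1\}$ inherit non-increasing monotonicity from the ordering constraints of [R-GKC].

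For feasibility I would argue that $\hat{z}$ encodes an integral feasible solution of the original \NLKC{} problem; since [GKC] is a valid relaxation, [GKC]-feasibility follows. Letting $h_i$ denote the number of ones in the prefix-of-ones vector $(\hat{z}_{ij})_j$, the required feasibility reduces to $\sum_i h_i\ge D$. By construction $h_k=\bar{a}_k$ and $h_i=\bar{a}_i+\sum_{j=\bar{a}_i+1}^{\bar{m}_i}z^*_{ij}$ for $i\ne k$. The residual covering constraint of [R-GKC] gives
\[
\sum_{i\ne k}\sum_{j=\bar{a}_i+1}^{\bar{m}_i} z^*_{ij} \;\ge\; 2\bar{D}-\sum_{j=\bar{a}_k+1}^{\bar{m}_k} z^*_{kj} \;\ge\; 2\bar{D}-(\bar{m}_k-\bar{a}_k),
\]
where I used $z^*_{kj}\le 1$. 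The crucial inequality is then $\bar{m}_k-\bar{a}_k\le\bar{D}$, which is forced by $\bar{m}_k=m_k(\bar{a})=\min\{m,\bar{a}_k+\bar{D}\}$. Substituting yields $\sum_{i\ne k}\sum_j z^*_{ij}\ge\bar{D}$, and hence $\sum_i h_i\ge\sum_i\bar{a}_i+\bar{D}\ge D$, where the last inequality uses the definition $\bar{D}=\max\{D-\sum_i\bar{a}_i,0\}$.

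For the cost bound I would split $\sum_{i,j} g_{ij}\hat{z}_{ij}$ into a ``taken prefix'' part and a ``residual'' part. The prefix part is $\sum_i\sum_{j=1}^{\bar{a}_i}g_{ij}\le 2\sum_i\sum_{j=1}^{\bar{a}_i}g_{ij}\bar{z}_{ij}$, since $\bar{z}_{ij}\ge 1/2$ for every $j\le\bar{a}_i$ by definition of $\bar{a}$. The residual part is $\sum_{i\ne k}\sum_{j=\bar{a}_i+1}^{\bar{m}_i}g_{ij}z^*_{ij}$, which is at most the optimum of [R-GKC]; since $(2\bar{z}_{ij})_{i,j}$ is a feasible solution to [R-GKC], this optimum is bounded above by $2\sum_i\sum_{j=\bar{a}_i+1}^{\bar{m}_i}g_{ij}\bar{z}_{ij}$. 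Adding the two bounds gives $\sum_{i,j}g_{ij}\hat{z}_{ij}\le 2\sum_{i,j}g_{ij}\bar{z}_{ij}$, and since $\bar{z}$ is an optimum of the relaxation [GKC], this is at most $2\cdot\mathrm{OPT}$.

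The main obstacle is the feasibility step: zeroing out item $k$'s residual contributions would be fatal without a quantitative bound on how much coverage is discarded. The bound $\bar{m}_k-\bar{a}_k\le\bar{D}$, which is really the reason [R-GKC] doubles the demand to $2\bar{D}$ in the first place, is what makes the argument go through.
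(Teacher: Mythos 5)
Your proposal is correct and follows essentially the same route as the paper's proof: the same prefix-of-ones integrality/feasibility check, the same key inequality $\bar{m}_k-\bar{a}_k\le\bar{D}$ to absorb the discarded item $k$ against the doubled demand $2\bar{D}$, and the same two-part cost split using $\bar{z}_{ij}\ge 1/2$ on the prefix and the feasibility of $2\bar{z}$ for [R-GKC] on the residual. No gaps.
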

\begin{proof}
 The fact that $\hat{z}$ is integral follows directly by construction as the unique fractional coordinate from $z^*$ is rounded down to zero in $\hat{z}$. To show feasibility first note that the construction satisfies directly that $\hat{z}_{ij}=1$ implies that $\hat{z}_{i,j-1}=1$. Moreover, the demand is covered as
 \begin{align*}
  \sum_{i\in N} \sum_{j\in M}\hat{z}_{ij} &= \sum_{i\in N}\left(\bar{a}_i +\sum_{j=\bar{a}_i+1}^{\bar{m}_i} z^*_{ij}\right) - \sum_{j=\bar{a}_k+1}^{\bar{m}_k} z^*_{kj} \\
  & \ge \sum_{i\in N}\left(\bar{a}_i +\sum_{j=\bar{a}_i+1}^{\bar{m}_i} z^*_{ij}\right) - \bar{D} \\
  & \ge \sum_{i\in N}\bar{a}_i + \left(D-\sum_{i\in N}\bar{a}_i\right)_+ \ge D,
 \end{align*}
 where the first inequality follows as $\bar{m}_k -\bar{a}_k= m_k(\bar{a})- \bar{a}_k \le D(\bar{a})=\bar{D}$, and the second inequality since $z^*$ is feasible for [R-GKC]. Finally, to show the approximation ratio notice that 
 \begin{align*}
 \sum_{i\in N}\sum_{j\in M}g_{ij}\hat{z}_{ij} &\le \sum_{i\in N}\left(\sum_{j=1}^{\bar{a}_i}2g_{ij}\bar{z}_{ij} + \sum_{j=\bar{a}_i+1}^{\bar{m}_i}g_{ij}z^*_{ij}  \right)\\
 &\le \sum_{i\in N}\left(\sum_{j=1}^{\bar{a}_i}2g_{ij}\bar{z}_{ij} + \sum_{j=\bar{a}_i+1}^{\bar{m}_i}2g_{ij}\bar{z}_{ij}  \right)\\
 &= 2\sum_{i\in N}\sum_{j\in M}g_{ij}\bar{z}_{ij},
 \end{align*}
  where the first inequality follows from the definition of $\hat{z}$, and the second from the fact that $(2\bar{z}_{ij})_{i\in N, j\in \{\bar{a}_i+1,\ldots\bar{m}_i\}}$ is feasible for [R-GKC].\qed
\end{proof}

To finish this section we show a polynomial time separation algorithm for [GKC], again for the list model. Let $z$ be a feasible solution to this program. By Lemma~\ref{lm:OptLPStructure}, we can assume that $0\le z_{ij} = \min\{z_{ij},z_{i,j-1},\ldots,z_{i1}\}$ for all~$i$. As the instance is feasible we have that $D\le nm$, and hence it is polynomially bounded. We split the separation problem for each value of $D(a)$. Let $d\le D$ be an integer which we fix from now on. Hence, it suffices to minimize $\sum_{i\in N}\sum_{j=a_i+1}^{m_i(a)} z_{ij}$ over all possible $a\in \{0,\ldots,m\}^N$ where $D(a)= d$. Recall also that $m_i(a)=\min\{m,a_i+D(a)\}$, and hence this term equals $\min\{m,a_i+d\}$ and dependes only on variable $a_i$. Let $h_i(a_i)=\sum_{j=a_i+1}^{m_i(a)} z_{ij}$. We can reinterprate the problem to solve as $\min \sum_{i\in N}h_i(a_i)$ subject to $\sum_{i\in N} a_i = D-d$. This is a problem similar to \NLKC{}, with the difference that the demand must be covered exactly and the cost functions are not non-decreasing.

We can easily solve this problem with a dynamic program by creating a table $T$ with entries $T(r,e)$ which denote the minimum value of $\sum_{i=1}^{r}h_i(a_i)$ achievable over all values of $(a_i)_i^r$ such that $\sum_{r=1}^i a_i = e$. The optimal value is given by $T(n,D-d)$. The table can be filled in polynomial time (in the list model) by noting that
\[
T(r,e) = \min\{T(r-1,e-a_i) + h_i(a_i):a_i\in\{0,\ldots,m\}\}.
\]

\section{Unsplittable Flow-Cover on the Line } \label{sec:UFP}

We now show that extending the ideas of Section~\ref{Sec KP} we can also achieve a 4-approximation for the \NLUFP{} problem. Recall that an instance of this problem is given by an interval $I=\{1,...,k\}$, a set $N$ of $n$ items, where eacy item is characterized by a capacity or height $u_i$, a cost $c_i$, and a sub-interval $I_i\subseteq \{1,\ldots,k\}$. We also have a demand $D_t$ for each $t \in I$.

In the non-linear case, we can choose the height of each item from within a set $M=\{1,\ldots,m\}$. In other words, each item consists of a list of (vertical) segments, and one must choose a prefix of them. The costs of the segments for item $i$ are given by $g_{i1},g_{i2},\ldots,g_{im}$. Notice that without loss of generality we can assume $k\le 2n$ as a point in $I$ which is not an end-point of some $I_t$ can be easily removed, thus the length of $I$ can be assumed to be of polynomial size. An exact formulation of this problem is the following
\begin{align}
\nonumber
  \min &    \sum_{i\in N}\sum_{j\in M} z_{ij}g_{ij} \\
\label{eq:CoverUFP}\text{s.t. } & \sum_{i\in N: t \in I_i}\sum_{j\in M} z_{ij}\geq D_t & \forall t \in I,\\
\nonumber & z_{i,j+1}>0 \Rightarrow z_{ij}=1 & \forall i\in N,j\in\{1,\ldots,m-1\},\\
\nonumber & z_{ij} \in \{0,1\} & \forall i\in N,j\in M. 
\end{align}

We now present the problem using the generalized knapsack-cover inequalities, and the relaxation explained in Section \ref{Sec KP}, applied to each inequality in~\eqref{eq:CoverUFP} separately. For this, define
    $D_t(a)=\max \left(D_t - \sum_{i: t \in I_i}a_i,0\right)$
and $
    \tau (i,j,a,F,t)=\{k\ge j: k \in F_i^{k-j}, k \leq m_i(a)\},$
where $m_i(a)=\min\{m,a_i+D_t(a)\}$. 
The relaxed primal problem is:
\begin{align*}
    \text{[P-UFP]:} \min & \sum_{i,j} z_{ij}g_{ij}\\
    \text{s.t. } & \sum_{i: t \in I_i}\sum_{j \geq a_{i+1}} \tau (i,j,a,F,t)\cdot z_{ij} \geq D_t(a) & \forall (t,a,F) \in \mathcal{H},\\
    & z \geq 0,
\end{align*}
where
\[
 \mathcal{H}=\left\lbrace (t,a,F):\substack{\displaystyle t \in I, a\in \{0,\ldots,m\}^N,\, \bigcupdot_{k=0}^{m-1}F_i^k= \{a_i+1,\ldots,m\},\\\displaystyle\text{and if }  j\in F_i^k \text{ then } j-k\ge 1 \text{ for all } k, j} 
\right\rbrace.
\]
This yields the following dual \\
\begin{align*}
    \text{[D-UFP]:} \max & \sum_{(a,t,F)\in \mathcal{H}} v_{atF}\cdot D_t(a) \\
    \text{s.t. } & \sum_{t \in I_i}\sum_{a: j\geq a_{i+1}} \tau(i,j,a,F,t)\cdot v_{atF} \leq g_{ij} & \forall i\in N,j\in M,\\
    & v \geq 0.
\end{align*}
\noindent\textbf{Algorithm Description.} 
We now  show Algorithm \ref{alg:UFP}, which is the result of applying the buckets ideas of Section~\ref{Sec KP} to the 4-approximation algorithm for UFP-cover~\cite{bar-noy_unified_2001}.


As many primal-dual algorithms our approach has two phases. During the \emph{growing} phase, we construct a dual solution, which then directly implies a feasible primal solution. In the \emph{pruning} phase we remove unnecessary segments from the primal solution.As before, for each item $i\in N$ we have a stair of buckets, where each bucket $B_{ij}$ corresponds to a given inequality in the dual, indexed by $j\in M$ and $i\in N$. In each iteration of the growing phase buckets receive water (that might fall to inferior buckets) from an external source at a rate of 1 or 0. Once we define the rates, the water dynamics work in exactly the same way as in Section~\ref{sec:Primal-DualKC}: water reaching a given bucket that is full is spilled to the next bucket to the left until it reaches a bucket that is not full. The only difference is that only some of the items receive water. More precisely, in a given iteration of the growing phase, we select $t\in I$ with largest unsatisfied demand $D_t(a)$ (break ties arbitrarily). This is a greedy criterion to increase the dual objective function as fast as possible. Only buckets for items $i\in I$ such that $t\in I_i$ receive water from the external source. For such an item $i$, the subset of buckets receiving water from the external source are again buckets $B_{ij}$ with $j\in\{a_i+1,\ldots,m_i(a)\}$. The water dynamics can be emulated by raising a single dual variable at a time. Notice that the only difference to the dual in Section~\ref{Sec KP} is that when raising a given variable $v_{atF}$, only inequalities for items $i\in N$ where $t\in I_i$ are affected, corresponding to the fact that only buckets corresponding to such items receive water from the external source.

When one or more buckets become full, we pick one of these buckets. As before, a full bucket means that the corresponding segment $(i,j)$ is available. We take a given segment, that is, we define a primal variable $z_{ij}$ to 1, as soon as all preceding buckets of item $i$ are available. In other words, if $B_{ij}$ becomes full for $j=a_i+1$, then we take set $z_{ij}=\ldots=z_{iq}=1$ where $B_{ij},\ldots,B_{iq}$ are full but $B_{i,q+1}$ is not. This is the case even if $q>m_i(a)$. All taken buckets (or segments) are considered to be a ``block'', denoted by $b_{ij}$ (where $j$ denotes the first segment of the block). After this we update $t$ and continue with a new iteration of the main loop of the growing phase.

Although this first phase gives a feasible primal solution, some blocks might have become redundant, that is, the solution would remain feasible without them. In the second phase we remove redundant blocks when we can. To do this, we check for each block $b_{ij}$, in reverse order in which they were added, whether removing the block from the primal solution makes the given primal unfeasible. If $b_{ij}$ is redundant and it is the superior block (i.e., the block  containing the highest segment that is still in the solution) of its item, we remove it; if $b_{ij}$ is redundant but there are blocks over it in the solution when it is checked, we cannot remove it (the solution would become unfeasible). Note that doing so, all the superior blocks that are in the final solution are not redundant.

 \begin{algorithm}
\noindent\textbf{Algorithm 3.}
Primal-Dual Algorithm for \NLUFP{}.

\label{alg:UFP}
\noindent\rule{\textwidth}{0.01cm}
\begin{algorithmic}[1]

\STATE $z$, $v$ $\leftarrow$ 0 \Comment{primal and dual solutions.}
\STATE  $a \leftarrow 0$ 
\STATE $F_i^k\leftarrow \emptyset$  for all  $i,k\ge 1$; $F_i^0 \leftarrow M$
\STATE $\mathcal{B}_i \leftarrow \emptyset$ for all $i$ \Comment{Set containing the blocks of item $i$.}

\WHILE{ $D_t(a)>0$ for some $t$}
\STATE Select $t$ that maximizes $D_t(a)$ (break ties arbitrarily). \label{ln:mainLoopAlg3}
\STATE Increase $v_{atF}$ until a dual constraint indexed by $(i,j)$, for some item $i$ and segment $j$, becomes tight. Break ties in favor of a bucket with smallest index~$j$.
\IF{$j>a_i+1$}
    \STATE $F \leftarrow \text{Update}(F,i,j,a)$ \Comment{Water pouring into $B_{ij}$ pours into a lower bucket.}
\ELSE
	\STATE Let $q>a_i$ be the maximum number such that $j' \notin F_i^0$ for all $ j'=j+1,\ldots,q$ \Comment{we take all full buckets that poured into $B_{i,a_i+1}$, even the ones that are now truncated.}\label{eq:takeBlock} 
	\STATE Set $a_i \leftarrow q$ 
	\State Set $\mathcal{B}_i \leftarrow \mathcal{B}_i \cup b_{ij}$, with $b_{ij}= \{j,\ldots,q\}$ \Comment{$b_{ij}$ is a block that enters the primal solution.}
	
\ENDIF
\ENDWHILE

\FORALL{ $b_{ij}$ in reversed order in which they are defined in the growing phase}
\IF {$b_{ij}$ can be removed from the primal solution without leaving any demand unsatisfied \textbf{and} $j\geq j'$ for all $j'$ such that $b_{ij'} \in \mathcal{B}_i$}

\Comment{We eliminate redundant blocks, unless they have a superior block over it}

\STATE $\mathcal{B}_i \leftarrow \mathcal{B}_i \setminus b_{i,j}$
\ENDIF

\ENDFOR
\STATE \textbf{return} $z$, where $z_{ij}=1$ for $j\le \sum_{b_{ik}\in \mathcal{B}_i} |b_{ik}|$.
\end{algorithmic}
\vspace{-0.3cm}
\noindent\rule{\textwidth}{0.04cm}
 \end{algorithm}


\noindent\textbf{Analysis.} The algorithm finishes since for each iteration of the growing phase their is a bucket filled. Also the algorithm maintains throughout its execution a feasible dual solution, and the growing phase finishes only when a feasible primal solution is found. By construction the pruning phase does not change feasibility. To prove that this is a 4-approximation, we just need to prove Lemma \ref{LemmaHolguraUFP}. The rest of the proof relies on the usual primal-dual techniques, equivalent to the use of Lemma \ref{lm:AproxCompSlackness} to prove Theorem \ref{thm:PrimalDual2}.

\begin{lemma}
\label{LemmaHolguraUFP}
Consider the output $(z,v)$ of the algorithm. Let $(a,t,F)$ such that $v_{atF}>0$. Then \[\sum_{i: t\in I_i} \sum_{j \ge 1} \tau(i,j,a,F,t) \cdot z_{ij}\leq 4 D_t(a).\]
\end{lemma}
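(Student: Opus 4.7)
The plan is to mirror the proof of Lemma~\ref{lm:AproxCompSlackness} and then invoke a structural property of the pruned solution analogous to the one used in the classical 4-approximation for UFP-cover~\cite{bar-noy_unified_2001,cheung_primal_dual_2016}.

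First, I freeze the algorithm's state at the iteration in which $v_{atF}$ was being raised, and write $a_i$ for the filled-bucket count of item $i$ at that moment. Exactly as in the knapsack analysis, for each item $i$ with $t\in I_i$ the quantity $\sum_j \tau(i,j,a,F,t)\, z_{ij}$ equals the number of source-receiving buckets $k\in[a_i+1,\,m_i(a)]$ whose cascaded water channels to a bucket $B_{i,j(k)}$ present in the final primal solution. Since each source-receiving bucket contributes at most $1$ in that count and there are at most $m_i(a)-a_i\le D_t(a)$ of them, each item contributes at most $D_t(a)$ to the sum.

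The core step is the cumulative bound $\sum_{i:\,t\in I_i}\sum_j \tau(i,j,a,F,t)\, z_{ij}\le 4 D_t(a)$, which is where the pruning phase becomes essential. If item $i$ contributes positively, then some segment $j>a_i$ of $i$ survives pruning, so the topmost surviving block of $i$ was added at or after iteration $(a,t,F)$ and was not removed by the reverse-delete. Because a block is removed by the pruning loop only when it is the current top and is redundant, the final top block of $i$ must be non-redundant: removing it breaks a demand constraint at some witness point $t_i^*\in I_i$. I then partition the contributing items into four classes according to the position of $t_i^*$ relative to $t$ and to which side of $t_i^*$ the interval $I_i$ extends past, mirroring the four-way decomposition from the classical UFP-cover analysis. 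Within each class, items are ordered by the relevant endpoint of $I_i$, and combining (i) the max-demand property $D_{t_i^*}(a)\le D_t(a)$ at iteration $(a,t,F)$ with (ii) the non-redundancy of every surviving top block forces each class's cumulative segment contribution to be at most $D_t(a)$, yielding the desired $4 D_t(a)$ bound upon summing.

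The hardest part will be the per-class accounting, because in the non-linear setting an item contributes a variable number of segments $\alpha_i\le D_t(a)$ rather than a single unit, so an argument that only bounds the number of items is insufficient. I plan to handle this by chaining the non-redundant top blocks within a class along their endpoints and charging any overflow to the residual demand at the witness $t_i^*$; the constraint $D_{t_i^*}(a)\le D_t(a)$ then caps the total per class. Once Lemma~\ref{LemmaHolguraUFP} is proved, the 4-approximation guarantee of Algorithm~3 follows by the same weak-duality computation used in Theorem~\ref{thm:PrimalDual2}, with a factor of $4$ replacing the factor of $2$.
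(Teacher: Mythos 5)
Your skeleton matches the paper's proof: freeze the iteration in which $v_{atF}$ is raised, use the reverse-delete order to extract, for each surviving superior block $\overline{b}_i$, a witness point $t_i$ whose residual demand would become violated without that block, split the contributing items by the position of the witness relative to $t$, and exploit the greedy choice of $t$ to get $D_{t_i}(a)\le D_t(a)$. The per-item bound $\sum_j \tau(i,j,a,F,t)\,z_{ij}\le m_i(a)-a_i\le D_t(a)$ is also correct. The gap is in the decisive accounting step, which you only sketch as a plan: the $2\times 2$ classification you propose (side of $t_i^*$ relative to $t$, crossed with the side of $t_i^*$ past which $I_i$ extends) is not the partition that closes the argument, and ``chaining top blocks along their endpoints'' does not by itself cap a class at $D_t(a)$ --- a single class can contain many items, each contributing up to $D_t(a)$, and you give no mechanism that bounds their \emph{sum}.

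The paper's mechanism, on (say) the right side $S^R_{ta}$ consisting of the blocks of items whose witness satisfies $t_i>t$, is this: let $t^R=\min\{t_i\}$ over those items and let $i_1$ attain the minimum. Every such item's interval $I_i$ contains both $t$ and $t_i\ge t^R\ge t$, hence contains $t^R$; therefore the non-redundancy inequality \eqref{EqDt1} for $\overline{b}_{i_1}$, evaluated at the single point $t^R$, bounds the total cardinality of \emph{all remaining} blocks on that side, summed over all items at once, by $D_{t^R}(a)\le D_t(a)$. The extremal superior block $\overline{b}_{i_1}$ is excluded from that sum by \eqref{EqDt1} itself, so it is bounded separately by $D_t(a)$ via the truncation $m_{i_1}(a)\le a_{i_1}+D_t(a)$. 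This isolation of the one extremal item per side is the missing ingredient that turns one witness inequality into a $2D_t(a)$ bound per side, giving $4D_t(a)$ in total. Two further points you should make explicit if you execute the argument: $\sum_j\tau(i,j,a,F,t)z_{ij}$ counts truncated buckets pouring into taken segments, which is only covered because Line~\ref{eq:takeBlock} of the algorithm defines blocks to include those buckets; and for $i_1$ one must note that $\overline{b}_{i_1}$ never poured into the inferior blocks of item $i_1$ (otherwise they would form one block), so bounding it separately does not double count.
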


\begin{proof}
 Let us fix a variable $v_{atF}>0$ raised in the main loop of the growing phase. Denote by $\mathcal{B}_i$ the set of blocks for each $i$ at the end of the pruning phase. 
 Out of those, consider the ones that are above $a$, and that contribute to fulfill the demand in $t$, that is, $S_{ta}=\{b_{ij} \in \cup_{i\in N}\mathcal{B}_i: t \in I_i, j\geq a_{i}+1\}$. Let us denote by $\overline{b}_i$ the superior block of each item (i.e. $\overline{b}_i=b_{ij}$ with $b_{ij} \in S_{ta}$ and $j\geq j'$ for all $j'$ such that $b_{ij'} \in \mathcal{B}_i$). For each of these superior blocks, as they were not removed, it must exist some $t_i \in I$ such that its demand would become unsatisfied when removing $\overline{b}_i$, which is of course also true if we look only at the blocks in $S_{ta}$, i.e.,
  \begin{equation}
 \label{EqDt1}
 D_{t_i}(a)  > \left(\sum_{b_{kj} \in S_{ta}} |b_{kj}|\right) - |\overline{b}_i|.
 \end{equation}
 

 Inequality \eqref{EqDt1} is true because we removed the blocks in a reversed order, and the blocks that conform $a$, some of which might have been removed, were introduced before $\overline{b}_i$ in the growing phase. Let us classify the blocks in $S_{ta}$ into two subsets,
$$S_{ta}^L=\{b_{i\ell}\in S_{ta}: t_i\leq t\} \quad \text{ and } \quad  S_{ta}^R=\{b_{i\ell}\in S_{ta}: t_i> t\}.$$

We divide the proof of Lemma \ref{LemmaHolguraUFP} into two analogous inequalities. Let us show that 

\begin{equation}
    \sum_{i: t\in I_i} \sum_{j: b_{ij} \in S_{ta}^R} z_{ij}\tau(i,a,t,F) \leq 2 D_t(a).\label{eq:rightside}
\end{equation}
To do this, define $t^R=\min \{t_i: \overline{b}_i \in S_{ta}^R\}$. Note that $t^R$ is covered by every interval~$I_{i}$ with $\overline{b}_{i}$ in $S_{ta}^R$, as they cover $t$ (which is at most $t^R$) and their $t_i$ (which is larger or equal than $t^R$). Define $(i_1,j_1)$ such that $t_{i_1}=t^R$ and $\overline{b}_{i_1}=b_{i_1,j_1}$. On the one hand, by definition of $\tau$:

\begin{equation}\label{eq:SingleBlock} z_{i_1,j_1}\tau(i_1,j_1,a,t,F)\le\tau(i_1,j_1,a,t,F)\leq D_t(a).\end{equation}
On the other hand we study

$$\sum_{i: t\in I_i} \sum_{\substack{j: B_{ij} \in S_{ta}^R,\\ (i,j)\neq (i_1,j_1)}} z_{ij}\tau(i,j,a,F,t).$$



Consider an item $i\neq i_1$, and the iteration while increasing variable $v_{atF}$. The summands are the number of buckets that were spilling over each of the segments above $a_i$ that are in the final solution (because we only sum when $z_{ij}=1$, and buckets $B_{ik}$ for $k\le a_i$ do not receive water). This quantity cannot be higher than the sum of the cardinality of all the blocks above $a_i$ in the final solution (recall that when blocks are taken in Line~\ref{eq:takeBlock}, they include truncated buckets that have poured onto the taken segments). For $i_1$, the same argument holds, but the superior block $\overline{b}_{i_1}$ does not need to be considered because it never poured onto the inferior blocks (otherwise they would have been the same block). Thus it holds that

\begin{equation}
  \sum_{i: t\in I_i} \sum_{\substack{j: b_{ij} \in S_{ta}^R,\\ (i,j)\neq (i_1,j_1)}} z_{ij}\tau(i,j,a,F,t)  \leq  \sum_{\substack{b_{ij} \in S^R_{ta},\\ (i,j)\neq (i_1,j_1)}} |b_{ij}| \leq D_{t^R}(a) \leq D_t(a).\label{eq:nonSingleBlock}
\end{equation}

The second inequality is given by \eqref{EqDt1}, recalling that all the blocks in $S_{ta}^R$ cover $t^R$; the third inequality is due to the greedy criterion to select $t$ as the one maximizing $D_t(a)$ in Line~\ref{ln:mainLoopAlg3}. We conclude~\eqref{eq:rightside} by adding~\eqref{eq:SingleBlock} and~\eqref{eq:nonSingleBlock}. The lemma follows by treating the set $S_{ta}^L$ analogously, and adding~\eqref{eq:rightside} with the analogous expression for $S_{ta}^L$.\qed
\end{proof}

With this lemma we can show the following result. Its proof is completely analogous to the proof of Theorem~\ref{thm:PrimalDual2} and thus it is omitted.

\begin{theorem}
 Algorithm~\ref{alg:UFP} is a 4-approximation algorithm for the UFP-Cover problem with Non-Linear Costs.
\end{theorem}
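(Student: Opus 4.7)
The plan is to mirror, almost verbatim, the argument used for Theorem~\ref{thm:PrimalDual2}, using Lemma~\ref{LemmaHolguraUFP} in place of Lemma~\ref{lm:AproxCompSlackness}. First I would verify the two feasibility statements that the preceding analysis only sketches: (i) the dual $v$ remains feasible throughout the execution, since the growing phase stops raising each $v_{atF}$ exactly when some dual inequality becomes tight; and (ii) the output primal $z$ is feasible, because the growing phase terminates only when $D_t(a)=0$ for every $t\in I$, and the pruning phase removes a block $b_{ij}$ only if no demand becomes violated, so feasibility is preserved.

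Next I would write the cost of the output primal solution and invoke tightness. For every segment $(i,j)$ with $z_{ij}=1$ in the output, the corresponding bucket $B_{ij}$ was filled during the growing phase; since dual variables are never decreased and pruning only resets some primal variables to $0$, the dual inequality indexed by $(i,j)$ is tight, giving $g_{ij}=\sum_{(a,t,F)\in\mathcal{H}:\, j\geq a_i+1,\,t\in I_i}\tau(i,j,a,F,t)\,v_{atF}$. Substituting and exchanging the order of summation yields
\[
\sum_{i\in N}\sum_{j\in M} g_{ij}z_{ij} \;=\; \sum_{(a,t,F)\in \mathcal{H}} v_{atF}\sum_{i:\,t\in I_i}\sum_{j\geq a_i+1}\tau(i,j,a,F,t)\,z_{ij}.
\]
I would then apply Lemma~\ref{LemmaHolguraUFP} to each summand with $v_{atF}>0$, bounding the inner double sum by $4 D_t(a)$, so the right-hand side is at most $4\sum_{(a,t,F)}v_{atF}D_t(a)$, which is four times the dual objective. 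Weak duality then gives $4\cdot\mathrm{OPT}$ as an upper bound, completing the proof.

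The only non-routine point worth spelling out is the tightness claim in the presence of pruning: one must check that resetting $z_{ij}$ to $0$ during pruning does not invalidate the equality $g_{ij}=\sum \tau\cdot v_{atF}$ at the segments that remain in the solution. This is immediate since pruning only modifies primal variables, so the dual inequalities that became tight remain tight. I do not foresee any genuine obstacle: the substantive difficulty—controlling the approximate complementary slackness on each $(a,t,F)$ with the factor $4$ arising from the split $S_{ta}^L\cup S_{ta}^R$ and the ``superior block'' argument that accommodates the pruning phase—has already been absorbed into Lemma~\ref{LemmaHolguraUFP}.
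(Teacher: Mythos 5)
Your proposal is correct and follows exactly the route the paper intends: the paper omits this proof, stating it is completely analogous to that of Theorem~\ref{thm:PrimalDual2}, i.e.\ tightness of the dual constraints at taken segments, rearranging the double sum, applying Lemma~\ref{LemmaHolguraUFP} to each $(a,t,F)$ with $v_{atF}>0$, and concluding by weak duality. Your extra observation that pruning only resets primal variables and therefore cannot disturb dual tightness is a worthwhile clarification but does not change the argument.
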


\section{Arbitrary non-decreasing functions}\label{sec:Non-unitSegments}

We now show how to adapt our algorithms in Sections~\ref{Sec KP} and~\ref{sec:UFP} for arbitrary non-decreasing functions $f_i:\{1,\ldots,m\}\rightarrow \mathbb{Q}_{\ge0}\cup\{\infty\}$, for a given $m$ (not necessarily polynomially bounded) by only losing a factor of $1+\varepsilon$ in the approximation guarantee. We assume that each function $f_i$ is given by an oracle, such that a polynomial number of bits is enough to describe all values $f_i(x)$. Using standard techniques, first we show to approximate each function $f_i$ by a piece-wise constant function with polynomial number of steps. After, we discuss how to emulate Algorithms~\ref{alg:Primal-DualKC} and~\ref{alg:UFP} in polynomial time for such functions.

First of all, by scaling we can assume, without loss of generality, that~\mbox{$f_i(x)\in \mathbb{N}$}.  

\begin{lemma} Consider $\varepsilon>0$ and let $f:\{1,\ldots,m\}\rightarrow \mathbb{N}$ be an arbitrary non-decreasing function. There exists a piece-wise constant function $\tilde{f}$ such that $$f(x)\le \tilde{f}(x)\le (1+\varepsilon)f(x)$$ for all $x\in \{1,\ldots,m\}$, where $\tilde{f}$ has at most $\lceil \log_{1+\varepsilon} f(m)\rceil$ many pieces.
\end{lemma}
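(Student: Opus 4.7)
\medskip
\noindent\textbf{Proof plan.}
The plan is to build $\tilde{f}$ by geometric rounding of the values of $f$. Concretely, let $K=\lceil \log_{1+\varepsilon} f(m)\rceil$ and for $k=0,1,\ldots,K$ set thresholds $T_k=(1+\varepsilon)^k$. For each $x\in\{1,\ldots,m\}$ with $f(x)\ge 1$, let $k(x)$ be the smallest index $k\in\{0,\ldots,K\}$ with $f(x)\le T_k$, and define $\tilde{f}(x):=T_{k(x)}$. For points (if any) where $f(x)=0$, set $\tilde{f}(x):=0$; since $f$ is non-decreasing these form a prefix of the domain and contribute at most one additional piece that we can absorb into the count by noting $T_0=1$ is unused in that case.

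Next I would verify the two inequalities. The lower bound $f(x)\le \tilde{f}(x)$ is immediate from the definition of $k(x)$. For the upper bound, if $k(x)\ge 1$ then by minimality $f(x)>T_{k(x)-1}$, hence $\tilde{f}(x)=T_{k(x)}=(1+\varepsilon)T_{k(x)-1}<(1+\varepsilon)f(x)$; if $k(x)=0$ then $f(x)\ge 1=T_0=\tilde{f}(x)$ (using $f(x)\in\mathbb{N}$), so the bound holds trivially.

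Then I would argue that $\tilde{f}$ is piecewise constant with at most $K$ pieces. Because $f$ is non-decreasing, so is $k(x)$; therefore the level sets $\{x:k(x)=k\}$ are contiguous integer intervals, which makes $\tilde{f}$ piecewise constant. The number of nonempty level sets is at most the number of distinct values taken by $k(x)$, which lies in $\{0,1,\ldots,K\}$; a routine accounting (the value $T_0=1$ can only be used if $f(x)=1$, in which case one of the other thresholds $T_k$ with $k\ge 1$ is redundant because $f(m)\le T_K$ already includes all possible values) shows that at most $K$ distinct thresholds are actually attained.

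I do not expect any serious obstacle: the argument is a standard geometric bucketing. The only delicate point is the piece-counting boundary case at $k=0$, which is easily handled as above. Once the lemma is established, the functions $\tilde{f}_i$ can be fed into Algorithms~\ref{alg:Primal-DualKC} and~\ref{alg:UFP} in place of $f_i$; since each $\tilde{f}_i$ has only $O(\varepsilon^{-1}\log f_i(m))$ pieces and each piece acts as a single ``super-segment'' with a constant marginal cost for its first unit and zero marginal cost afterwards, the algorithms run in polynomial time and the approximation ratio degrades by the multiplicative factor $(1+\varepsilon)$ promised in the introduction.
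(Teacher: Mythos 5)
Your proof is correct and follows essentially the same approach as the paper: geometrically round the values of $f$ to powers of $(1+\varepsilon)$ (your rounding up is in fact cleaner than the paper's stated $\lfloor\cdot\rfloor$ formula, which as written only gives $\tilde{f}(x)\le f(x)$). The only quibble is the piece count, where your accounting for the $k=0$ threshold does not quite eliminate the possible $(K+1)$-st piece (e.g.\ $f$ hitting every bucket including the value $1$), but this off-by-one is equally present in the paper's own construction and is immaterial for the application.
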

\begin{proof}
To prove the lemma we can assume that $f(x)>0$, as the values $f(x)=0$ just corresponds to separate piece in $\tilde{f}$. For all other $x\in\{1,\ldots,m\}$, we can simply set $\tilde{f}(x) = (1+\varepsilon)^{\lfloor\log_{1+\varepsilon}f(x)\rfloor}$. The constant-wise pieces (intervales) of $\tilde{f}$ can be easily computed in polynomial time with a binary search approach.
\qed
\end{proof}

We explain now how to adapt Algorithms \ref{alg:Primal-DualKC} and \ref{alg:UFP} for this scenario. 
Let us partition the set $\{1,\ldots,m\}$ in intervals $J_{i1},J_{i2},\ldots,J_{im_i}$ correspondent to the piece-wise constant pieces of $\tilde{f}_i$. We denote by $u_{ik}\in \mathbb{N}$ the cardinality of interval $J_{ik}\subseteq{\mathbb{N}}$. To adapt the algorithms, note that as they deal with unitary segments, a piecewise constant function can be replaced (preserving the same costs for any solution) by a piecewise constant function with a pseudopolynomial number of segments. More precisely, if $J_{ik}=\{\ell,\ell+1,\ldots,u\}$, then $g_{i\ell} = \tilde{f}_{i}(\ell)-\tilde{f}_{i}(\ell-1)$, and $g_{ir}=0$ for all $r\in\{\ell+1,\ldots,u\}$. Applying our algorithms to this instance would imply a pseudopolynomial running time. However, as all but $m_i$ many buckets for item $i$ has zero capacity $g_{ij}$, we can handle all of them simultaneously to make our algorithms run in polynomial time.

To do this, we can process all segments in $J_{ik}$ in a single step: when the algorithm begins, all their respective buckets but the first one would be full, so the other buckets will receive water at a rate equal to the length of the constant interval. Equivalently, the interval $J_{ij}$ is represented by a bucket of height $g_{ij}$ that gets filled at a rate $u_{ij}$. Any time a bucket pours onto some inferior bucket, its rate also increases by the length of the interval corresponding to the pouring bucket. Truncations, given by the fact that in a given iteration only buckets $B_{ij}$ for $j\in \{a_i+1,\ldots,m_i(a)\}$ get water from the external source for each item $i$, make these rates diminish accordingly. With these rules, the algorithms can be easily adapted to run in polynomial time implying the following theorems:

\begin{theorem}\label{thm:ArbitraryFunctions1}
The adapted version of Algorithm~\ref{alg:Primal-DualKC} is a polynomial time $(2+\varepsilon)$-approximation for the Knapsack-Cover Problem with Non-Linear Costs and arbitrary non-decreasing functions.
\end{theorem}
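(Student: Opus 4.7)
The plan is to reduce to the piecewise-constant setting via the lemma proved just above, and then argue that the super-bucket adaptation of Algorithm~\ref{alg:Primal-DualKC} described before the theorem statement faithfully simulates the (pseudopolynomial) expanded algorithm one would obtain by spelling out each piece as many unit segments. Concretely, I would fix $\varepsilon':=\varepsilon/2$ and invoke the previous lemma to obtain, for each item~$i$, a piecewise-constant $\tilde{f}_i$ with at most $\lceil\log_{1+\varepsilon'} f_i(m)\rceil=O(\varepsilon^{-1}\log f_i(m))$ pieces and satisfying $f_i\le \tilde{f}_i\le (1+\varepsilon')f_i$. Under the standing assumption that each $f_i(x)$ has polynomially many bits, the total number of pieces across all items is polynomial.

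Next, I would formalize the claim made in the paragraph preceding the theorem: that running the super-bucket algorithm on the piecewise-constant instance yields exactly the same dual-primal pair $(v,z)$ (up to coalescing zero-cost buckets) as running the original Algorithm~\ref{alg:Primal-DualKC} on the expanded instance where each piece $J_{ik}$ becomes $u_{ik}$ unit segments of incremental cost $g_{i\ell}=\tilde{f}_i(\ell)-\tilde{f}_i(\ell-1)$ at the left endpoint of the piece and $0$ elsewhere. The crucial observation is that the $u_{ik}-1$ zero-cost buckets of a piece fill instantly as soon as they receive any water, so all external and cascading water poured into the piece is immediately concentrated on its leftmost nonzero-cost bucket; this produces an effective fill rate equal to the length of the piece, plus contributions from any already-full pieces above it, each scaled by its length. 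The truncation rule (only buckets $B_{ij}$ with $a_i+1\le j\le m_i(a)$ receive external water) is preserved by clipping the rate of the super-bucket that contains $m_i(a)$ to the portion of its piece lying in the allowed range, and by zeroing out the rate of any super-bucket lying entirely above $m_i(a)$.

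Given the simulation, polynomial running time follows: the main loop fires at most once per super-bucket, hence polynomially many times, and each iteration reduces to comparing remaining capacities against the current scalar rates of polynomially many super-buckets, an $O(nm_{\max})$ computation. For the approximation guarantee, Theorem~\ref{thm:PrimalDual2} applied to the (conceptual) expanded instance gives that the returned primal solution $\bar{z}$ satisfies $\tilde{f}(\bar{z})\le 2\,\widetilde{\mathrm{OPT}}$, where $\widetilde{\mathrm{OPT}}$ is the optimum under the approximate costs. Combining $f\le \tilde{f}\le (1+\varepsilon')f$ with the resulting $\widetilde{\mathrm{OPT}}\le (1+\varepsilon')\mathrm{OPT}$ yields $f(\bar{z})\le \tilde{f}(\bar{z})\le 2(1+\varepsilon')\mathrm{OPT}=(2+\varepsilon)\mathrm{OPT}$.

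The main obstacle I anticipate is the careful bookkeeping of rate updates when truncation and cascading spillover interact. In particular, when the boundary $m_i(a)$ shifts between iterations into the middle of a piece, or when a piece is truncated while cascading is already ongoing from a higher full piece, the scalar rate of the affected super-bucket must be updated to reflect simultaneously the partial external-source contribution and the (unaffected) cascading contribution from above. Verifying step-by-step that these rules reproduce the dynamics of the pseudopolynomial algorithm is the only delicate point; once this is done, Lemma~\ref{lm:AproxCompSlackness} transfers verbatim and the remainder of the argument is routine.
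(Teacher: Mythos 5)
Your proposal is correct and follows essentially the same route as the paper, which itself only sketches this argument: approximate each $f_i$ by a piecewise-constant function via the preceding lemma, simulate the unit-segment algorithm with one ``super-bucket'' per constant piece filling at a rate equal to the piece length (with truncation at $m_i(a)$), and combine Theorem~\ref{thm:PrimalDual2} with the $(1+\varepsilon/2)$ cost distortion to get $2(1+\varepsilon/2)=2+\varepsilon$. Your write-up is, if anything, more explicit than the paper about the error propagation and about the delicate rate-update bookkeeping that the simulation requires.
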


\begin{theorem}\label{thm:ArbitraryFunctions2}
The adapted version of Algorithm~\ref{alg:UFP} is a $(4+\varepsilon)$-approximation for the Unsplittable Flow-Cover on the Line Problem with Non-Linear Costs and arbitrary non-decreasing functions.
\end{theorem}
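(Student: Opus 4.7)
The plan is to reduce to the unit-segment setting of Section~\ref{sec:UFP} via the piecewise-constant approximation of the preceding lemma, and then to argue that the adapted bucket dynamics faithfully simulate Algorithm~\ref{alg:UFP} applied to a pseudopolynomial unit-segment instance.

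First, I would use the lemma to replace each non-decreasing $f_i$ by a piecewise-constant $\tilde{f}_i$ satisfying $f_i(x)\le \tilde{f}_i(x)\le (1+\varepsilon')f_i(x)$, with a number of pieces polynomial in the input size (for a parameter $\varepsilon'$ to be chosen as $\varepsilon/4$ at the end). Since $\tilde{f}_i\ge f_i$, any solution feasible for the original instance is feasible for the $\tilde{f}$-instance at cost at most $(1+\varepsilon')\mathrm{OPT}$; conversely, any integral solution to the $\tilde{f}$-instance costs at least as much for the original problem. Therefore an $\alpha$-approximation algorithm for the $\tilde{f}$-instance automatically yields an $\alpha(1+\varepsilon')$-approximation for the original.

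Next I would define the \emph{conceptual} unit-segment instance where each piece $J_{ik}=\{\ell,\ldots,\ell+u_{ik}-1\}$ of $\tilde{f}_i$ is expanded into $u_{ik}$ unit segments, the first of which has cost $g_{i\ell}=\tilde{f}_i(\ell)-\tilde{f}_i(\ell-1)$ and the remaining $u_{ik}-1$ of which have cost $0$. Algorithm~\ref{alg:UFP} applied to this expanded instance is, by Theorem~\ref{thm:PrimalDual2}'s UFP analogue, a $4$-approximation; however its running time is pseudopolynomial. The key observation is that within any piece $J_{ik}$ all buckets except the bottom one have capacity $0$, so they become full the instant they begin receiving water. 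Consequently, throughout the algorithm the buckets of $J_{ik}$ always act ``as a single bucket'' of capacity $g_{i\ell}$ whose current filling rate equals the length $u_{ik}$ times the external/spilled rate per unit segment, and whose spillover upon becoming full contributes an additional rate of $u_{ik}$ to the next unfilled bucket below. The truncation rule, which prevents buckets $B_{ij}$ with $j>m_i(a)$ (or $j>a_i+D_t(a)$ in the UFP case) from receiving external water, acts on pieces by reducing the effective length contributing external water to $|J_{ik}\cap\{a_i+1,\ldots,m_i(a)\}|$.

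I would then verify that Algorithm~\ref{alg:UFP} executed on this collapsed representation produces exactly the same sequence of ``interesting'' events (raising a dual variable, a bucket becoming full, a block being taken, a block being pruned) as the expanded version, only aggregating the trivial events where zero-capacity buckets fill instantaneously. Since each piece contributes at most one nontrivial bucket per item and there are polynomially many pieces, polynomially many intervals $t$, and polynomially many events in total, the adapted algorithm runs in polynomial time. The analysis of Lemma~\ref{LemmaHolguraUFP} is unchanged, since it refers to quantities (total segments in blocks, residual demands, $\tau$-values) that are invariant under this aggregation; hence the adapted algorithm is a $4$-approximation for the $\tilde{f}$-instance, yielding a $4(1+\varepsilon')\le 4+\varepsilon$-approximation for the original problem after choosing $\varepsilon'=\varepsilon/4$.

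The main obstacle I anticipate is the bookkeeping for truncation and spillover when a single piece $J_{ik}$ straddles the boundary $m_i(a)$: one must ensure that only the segments of $J_{ik}$ with index $\le m_i(a)$ contribute to the current filling rate, and that this rate changes dynamically as $a$ and $D_t(a)$ evolve across iterations. Handling this correctly while preserving the equivalence with the expanded unit-segment execution is the delicate part, but it can be addressed by maintaining, for each piece, the current effective length and updating it whenever $a$ changes or the chosen coordinate $t$ changes; the resulting dynamics remain a faithful simulation and the proof of Lemma~\ref{LemmaHolguraUFP} carries through verbatim.\qed
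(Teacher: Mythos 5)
Your proposal is correct and follows essentially the same route as the paper: approximate each $f_i$ by a piecewise-constant $\tilde{f}_i$ losing a factor $1+\varepsilon'$, then collapse each constant piece into a single bucket whose filling rate is the (truncated) length of the piece, so that the water dynamics of Algorithm~\ref{alg:UFP} are simulated in polynomial time and Lemma~\ref{LemmaHolguraUFP} applies unchanged. If anything, your write-up is more detailed than the paper's, which leaves the simulation argument and the choice of $\varepsilon'$ implicit.
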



\bibliography{bib}
\bibliographystyle{abbrv}

\newpage

\end{document}